\let\MYcaption\@makecaption
\let\@makecaption\MYcaption
\newtheorem{asmp}{Assumption}
\newcounter{parentcond}
\newtheorem{lem}{Lemma}
\theoremstyle{definition}
\newtheorem{rem}{Remark}
\newcolumntype{Y}{>{\centering\arraybackslash}X}
\newcolumntype{b}{>{\hsize=1.17\hsize}Y}
\newcolumntype{z}{>{\hsize=1.28\hsize}Y}
\newcolumntype{s}{>{\hsize=.55\hsize}Y}
\newcolumntype{y}{>{\hsize=.75\hsize}Y}
\newcolumntype{x}{>{\hsize=1.25\hsize}Y}
\let\save@mathaccent\mathaccent
\newcommand*\if@single[3]{%
  \setbox0\hbox{${\mathaccent"0362{#1}}^H$}%
  \setbox2\hbox{${\mathaccent"0362{\kern0pt#1}}^H$}%
  \ifdim\ht0=\ht2 #3\else #2\fi
  }
\newcommand*\rel@kern[1]{\kern#1\dimexpr\macc@kerna}
\newcommand*\widebar[1]{\@ifnextchar^{{\wide@bar{#1}{0}}}{\wide@bar{#1}{1}}}
\newcommand*\wide@bar[2]{\if@single{#1}{\wide@bar@{#1}{#2}{1}}{\wide@bar@{#1}{#2}{2}}}
\newcommand*\wide@bar@[3]{%
  \begingroup
  \def\mathaccent##1##2{%
    \let\mathaccent\save@mathaccent
    \if#32 \let\macc@nucleus\first@char \fi
    \setbox\z@\hbox{$\macc@style{\macc@nucleus}_{}$}%
    \setbox\tw@\hbox{$\macc@style{\macc@nucleus}{}_{}$}%
    \dimen@\wd\tw@
    \advance\dimen@-\wd\z@
    \divide\dimen@ 3
    \@tempdima\wd\tw@
    \advance\@tempdima-\scriptspace
    \divide\@tempdima 10
    \advance\dimen@-\@tempdima
    \ifdim\dimen@>\z@ \dimen@0pt\fi
    \rel@kern{0.6}\kern-\dimen@
    \if#31
      \overline{\rel@kern{-0.6}\kern\dimen@\macc@nucleus\rel@kern{0.4}\kern\dimen@}%
      \advance\dimen@0.4\dimexpr\macc@kerna
      \let\final@kern#2%
      \ifdim\dimen@<\z@ \let\final@kern1\fi
      \if\final@kern1 \kern-\dimen@\fi
    \else
      \overline{\rel@kern{-0.6}\kern\dimen@#1}%
    \fi
  }%
  \macc@depth\@ne
  \let\math@bgroup\@empty \let\math@egroup\macc@set@skewchar
  \mathsurround\z@ \frozen@everymath{\mathgroup\macc@group\relax}%
  \macc@set@skewchar\relax
  \let\mathaccentV\macc@nested@a
  \if#31
    \macc@nested@a\relax111{#1}%
  \else
    \def\gobble@till@marker##1\endmarker{}%
    \futurelet\first@char\gobble@till@marker#1\endmarker
    \ifcat\noexpand\first@char A\else
      \def\first@char{}%
    \fi
    \macc@nested@a\relax111{\first@char}%
  \fi
  \endgroup
}
\newcommand*{\inlineequation}[2][]{%
  \begingroup
    \refstepcounter{equation}%
    \ifx\\#1\\%
    \else
      \label{#1}%
    \fi
    \relpenalty=10000 %
    \binoppenalty=10000 %
    \ensuremath{%
      #2%
    }%
    ~\@eqnnum
  \endgroup
}
\newcommand{\removelatexerror}{\let\@latex@error\@gobble}
\begin{document}

\setlength{\textfloatsep}{5pt plus 1.0pt minus 2.0pt}

\title{A Unified Algorithmic Framework for Distributed Adaptive Signal and Feature Fusion Problems \\\LARGE --- Part I: Algorithm Derivation}

\author{Cem Ates~Musluoglu,
        and~Alexander~Bertrand,~\IEEEmembership{Senior~Member,~IEEE}
\thanks{Copyright \copyright 2023 IEEE. Personal use of this material is permitted. Permission from IEEE must be obtained for all other uses, in any current or future media, including reprinting/republishing this material for advertising or promotional purposes, creating new collective works, for resale or redistribution to servers or lists, or reuse of any copyrighted component of this work in other works.}
\thanks{This project has received funding from the European Research Council (ERC) under the European Union's Horizon 2020 research and innovation programme (grant agreement No 802895). The authors also acknowledge the financial support of the FWO (Research Foundation Flanders) for project G081722N, and the Flemish Government (AI Research Program).}
\thanks{C.A. Musluoglu and A. Bertrand are with KU Leuven, Department of Electrical Engineering (ESAT), Stadius Center for Dynamical Systems, Signal Processing and Data Analytics, Kasteelpark Arenberg 10, box 2446, 3001 Leuven, Belgium and with Leuven.AI - KU Leuven institute for AI. e-mail: cemates.musluoglu, alexander.bertrand @esat.kuleuven.be}
\thanks{A companion paper submitted together with this paper is provided in \cite{musluoglu2022unifiedp2}.}
\thanks{Digital Object Identifier: 10.1109/TSP.2023.3275272}
}

\maketitle

\begin{abstract}
In this paper, we describe a general algorithmic framework for solving linear signal or feature fusion optimization problems in a distributed setting, for example in a wireless sensor network (WSN). These problems require linearly combining the observed signals (or features thereof) collected at the various sensor nodes to satisfy a pre-defined optimization criterion. The framework covers several classical spatial filtering problems, including minimum variance beamformers, multi-channel Wiener filters, principal component analysis, canonical correlation analysis, (generalized) eigenvalue problems, etc. The proposed distributed adaptive signal fusion (DASF) algorithm is an iterative method that solves these types of problems by allowing each node to share a linearly compressed version of the local sensor signal observations with its neighbors to reduce the energy and bandwidth requirements of the network. We first discuss the case of fully-connected networks and then extend the analysis to more general network topologies. The general DASF algorithm is shown to have several existing distributed algorithms from the literature as a special case, while at the same time allowing to solve new distributed problems as well with guaranteed convergence and optimality. This paper focuses on the algorithm derivation of the DASF framework along with simulations demonstrating its performance. A technical analysis along with convergence conditions and proofs are provided in a companion paper.
\end{abstract}

\begin{IEEEkeywords}
Distributed optimization, distributed signal processing, spatial filtering, signal fusion, feature fusion, wireless sensor networks.
\end{IEEEkeywords}

\section{Introduction}

\IEEEPARstart{W}{ireless} Sensor Networks (WSNs) consist of a wireless network of sensor nodes, which collect, process, and share data in order to solve a specific signal processing task in a collaborative fashion. Such WSNs allow to easily acquire data at multiple locations simultaneously, which is useful in several application domains including health monitoring \cite{latre2011survey,bertrand2015distributed}, acoustics \cite{bertrand2011applications,markovich2015optimal}, structural monitoring \cite{xu2004wireless}, environmental studies \cite{othman2012wireless,albaladejo2010wireless}, and many others \cite{akyildiz2002wireless,yick2008wireless}.

In many cases, the aim is to find or estimate a common signal, filter or a set of parameters that satisfy a pre-defined optimality criterion involving the observation data from all the nodes \cite{sayed2014adaptation,bertsekas1989parallel}. Various ``work horse'' strategies and frameworks have been described previously to solve such problems in a distributed fashion. Well-known examples are consensus \cite{olfati2005consensus,olfati2007consensus}, incremental strategies \cite{bertsekas1997new,lopes2007incremental}, diffusion \cite{lopes2008diffusion,cattivelli2009diffusion,chen2012diffusion}, gossip \cite{boyd2006randomized,dimakis2010gossip}, or the alternating direction method of multipliers \cite{boyd2011distributed,wei2012distributed}.

\begin{figure}[t]
  \includegraphics[width=0.48\textwidth]{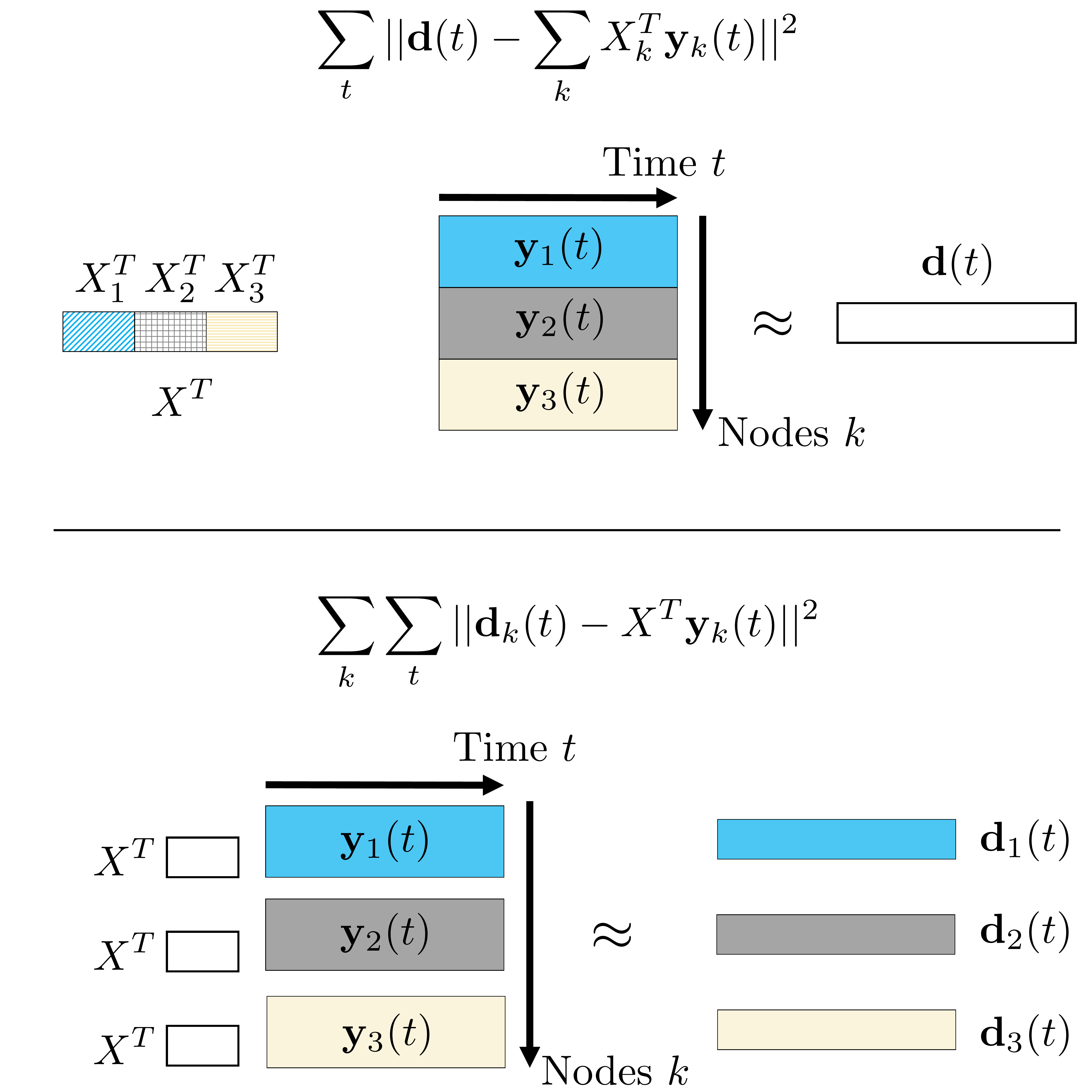}
  \caption{Comparison of the $3-$node problem setting of the DSFO (top) and the traditional consensus-type setting (bottom) with corresponding example objective functions for the case of least squares estimation. In the consensus setting, note that the objective is per-node separable and has a shared optimization variable $X$ which is assumed to be the same across all nodes (which is why there is no node subscript $k$).}
  \label{fig:data_comp}
\end{figure}

In order to achieve a distributed implementation, most of these methods rely on the separability of the global cost function $f$ as a sum of local functions $f_k$: $f(X)=\sum_k f_k(X)$, where $f_k$ depends only on the local data of node $k$, and where $X$ is a \textit{shared} optimization variable across all nodes. In this case, we say that $f$ is per-node separable. However, there exist various cases where this property is not satisfied, e.g., when optimizing a spatial filter $X$ that linearly combines the signals from different nodes. Classical examples are adaptive beamformers \cite{gershman2010convex,vorobyov2013principles}, multi-channel Wiener filtering \cite{doclo2002gsvd,somers2018generic}, principal component analysis, filters based on (generalized) eigenvectors of spatial covariance matrices \cite{belhumeur1997eigenfaces,sugiyama2007dimensionality}, etc. In these problems, the aim is to find a network-wide linear spatial filter $X\in\mathbb{R}^{M\times Q}$, $Q<M$ to apply to the network-wide $M-$channel time signal $\mathbf{y}(t)\in\mathbb{R}^{M}$ containing all sensor channels\footnote{In this paper, we adopt the terminology from the field of sensor arrays and multi-channel signal processing although the results are also applicable in a more general context, where $\mathbf{y}$ can be viewed as a generic feature vector in an $M-$dimensional feature space where distributed agents each observe a part of the feature vector.} from all nodes in the network, where $t$ denotes the time or sample index. The filter $X$ is typically designed to exploit the spatial correlation across the different nodes to optimize some network-wide objective function in the form $f(X^T\mathbf{y}(t))$, with $T$ the transpose operator. In this case, the function $f$ itself is not per-node separable, but the argument is, i.e., $f(X^T\mathbf{y}(t))=f(\sum_k X_k^T\mathbf{y}_k(t))$. If $\mathbf{y}$ is interpreted as a feature vector, this type of separation of the observed data is also known as feature partitioning or distributed features \cite{ying2018supervised,zheng2010attribute,mota2011distributed,manss2018distributed,gratton2021distributed,zhang2018communication}. In this paper, we refer to such cases as a distributed signal fusion optimization (DSFO) problem to emphasize that the framework also applies to traditional array processing problems such as beamforming or spatial filtering. A visual example that illustrates the conceptual difference between both aforementioned types of data separation, i.e., separability of the cost function versus separability of the argument, is given in Figure \ref{fig:data_comp} for the case of least squares estimation.

A commonly encountered strategy to solve DSFO problems is to compute all inner products involving the data vector $\mathbf{y}$ via a standard consensus-type subroutine that performs in-network averaging or summation \cite{li2011distributed,zeng2013distributed}, or by artificially rewriting the problem as a consensus problem (e.g., by treating the filter output $\mathbf{d}$ itself as a shared (consensus) optimization variable in the example of Figure \ref{fig:data_comp}). However, this strategy typically results in a distributed algorithm with nested iterative loops for each sample time $t$, each in itself requiring a substantial number of communication rounds. Such an iterative distributed subroutine typically has to be executed from scratch for each new sample observation at the sensors. This leads to a large communication burden, which also scales poorly with network size, i.e., a larger network increases the \textit{per node} transmission cost in these subroutines. In many cases, the use of such consensus-type subroutines even leads to a setting where each node shares more data than what it actually collects at its sensors.

Based on the block partitioning of the DSFO problem in Figure \ref{fig:data_comp}, a tempting alternative strategy could be to use a nonlinear Gauss-Seidel method \cite{bertsekas1989parallel,grippo2000convergence}, or so-called block coordinate descent algorithms. These are iterative algorithms in which a block of variables in $X$ is optimized while keeping all others fixed, and where the fixed blocks change across iterations. By selecting the blocks of $X$ according to the nodes (i.e., the $X_k$'s in Figure \ref{fig:data_comp}), each iteration of the nonlinear Gauss-Seidel method can then be ``outsourced'' to the node that is responsible for optimizing the corresponding coordinates, which makes it a better fit for the class of problems we are interested in. However, the convergence results for such nonlinear Gauss-Seidel methods often require conditions such as convexity assumptions or constraint sets that can be written as Cartesian products, where each factor corresponds to a constraint set for the block $X_k$ \cite{bertsekas1989parallel,grippo2000convergence}, which would not be satisfied in many spatial filtering optimization problems. Moreover, when optimizing the selected block of coordinates, forcing the other ones to remain constant leads to a new optimization problem that is often different from the original problem, and which can be significantly more difficult to solve.

In this paper, we introduce a generic distributed algorithm, referred to as the Distributed Adaptive Signal Fusion (DASF) algorithm, which can be used to solve generic linear DSFO problems. In each iteration of the algorithm, a node within the network is selected to receive compressed data from other nodes and to locally solve a lower-dimensional version of the original network-wide problem. A convenient property is that an instance of the same algorithm that solves the centralized network-wide optimization problem can be used to also solve the local (compressed) problems at each iteration. The compression allows to reduce bandwidth and energy usage in the network, while we still achieve convergence to the centralized solution. Moreover, since the locally constructed problem is of lower dimension, the computational cost required to solve it is also smaller compared to solving the network-wide problem. This makes the proposed algorithm also attractive in fully deterministic distributed settings where computational or memory resources are limited.

Various existing distributed algorithms can be shown to be special cases of our proposed DASF algorithm, including distributed algorithms for generalized eigenvalue decomposition (GEVD) \cite{bertrand2015dacgee}, spatial principal component analysis (PCA) \cite{bertrand2014dacmee}, least squares (LS), minimum mean square error (MMSE) or multi-channel Wiener filtering (MWF) \cite{doclo2009reduced,bertrand2010danse}, linearly constrained minimum variance (LCMV) beamforming \cite{markovich2012distributed,bertrand2013dlcmv,markovich2015optimal}, canonical correlation analysis (CCA) \cite{bertrand2015cca} and generalized CCA \cite{hovine2021distributed}. However, each of these algorithms was previously treated separately, with convergence proofs that were tailored to these specific cases. Our aim is to thoroughly define a unified algorithmic framework that contains these already existing algorithms but also extends to new DSFO problems. We also provide a toolbox available both in Matlab and Python to generate and validate new distributed algorithms within this framework \cite{musluoglu2022dsfotoolbox}.

The outline of this paper is as follows. In Section \ref{sec:prob_state}, we formally define the framework setting and the assumptions used throughout this paper. We then propose the DASF algorithm for fully-connected networks in Section \ref{sec:fc_dsf}, and later generalize it to general topologies in Section \ref{sec:ti_dsf}. Finally, we demonstrate the performance of the algorithm in a few new DSFO examples in Section \ref{sec:ex_sim}, thereby demonstrating the generalization properties of the DASF framework. We refer readers to the companion paper \cite{musluoglu2022unifiedp2} for detailed analyses and proofs.

\textbf{Notation:} Uppercase letters are used to represent matrices and sets, the latter in calligraphic script, while scalars, scalar-valued functions and vectors are represented by lowercase letters, the latter in bold. We use the notation $\chi_q^i$ to refer to a certain mathematical object $\chi$ (such as a matrix, set, etc.) at node $q$ and iteration $i$. The notation $\left(\chi^i\right)_{i\in\mathcal{I}}$ refers to a sequence of elements $\chi^i$ over every index $i$ in the ordered index set $\mathcal{I}$. If it is clear from the context (often in the case where $i$ is over all natural numbers), we omit the index set $\mathcal{I}$ and simply write $\left(\chi^i\right)_i$. A similar notation $\{\chi^i\}_{i\in\mathcal{I}}$ is used for non-ordered sets. Additionally, $I_Q$ denotes the $Q\times Q$ identity matrix, $\mathbb{E}[\cdot]$ the expectation operator, $\text{tr}(\cdot)$ the trace operator, $\textit{BlkDiag}\left(\cdot\right)$ the operator that creates a block-diagonal matrix from its arguments and $|\cdot|$ the cardinality of a set.

\section{Problem description}\label{sec:prob_state}
Consider a set of $K$ nodes, where $\mathcal{K}=\{1,\dots,K\}$ denotes the set of nodes. The nodes are interconnected in a connected graph where an edge between nodes $k$ and $q$ implies that these nodes can share data (e.g., via a wireless link). The set of neighbors of node $k$, i.e., the nodes that are connected to node $k$, is denoted by $\mathcal{N}_k$ (which excludes node $k$ itself).

Each node $k\in\mathcal{K}$ measures samples of a local $M_k$-channel signal $\mathbf{y}_k(t)\in\mathbb{R}^{M_k}$ at every time instance $t$. We define the network-wide $M$-channel signal $\mathbf{y}$ as 
\begin{equation}\label{eq:y_part}
  \mathbf{y}(t)=[\mathbf{y}_1^T(t),\dots,\mathbf{y}_K^T(t)]^T,
\end{equation}
with $M=\sum_k M_k$. All $\mathbf{y}_k$'s, and therefore also $\mathbf{y}$, are assumed to be (short-term) stationary and ergodic stochastic signals, such that their statistical properties can be properly estimated given a sufficiently large number of samples at different time instances.

The different channels of $\mathbf{y}$ can be spatially correlated across all nodes in the network, and we do not assume this correlation structure to be known. In a centralized setting, the channels of $\mathbf{y}$ can be linearly combined (fused) using a network-wide spatial filter $X\in\mathbb{R}^{M\times Q}$ with $Q$ output signals (with $Q\ll M$), where we typically aim to find an optimal $X$ such that the filter outputs $X^T\mathbf{y}\in\mathbb{R}^{Q}$ satisfy some optimality conditions. Typical examples of such filter design problems are listed in Table \ref{tab:ex_prob}, which can all be viewed as special cases of a general class of problems that will be formalized in the next subsection, which we refer to as (distributed) signal fusion optimization ((D)SFO) problems. This table is not exhaustive and various other signal fusion problems fit this framework (see e.g. \cite{cox1987robust,wouters2020multi}). Note that all of these examples require knowledge of the \textit{full} correlation matrix $R_{\mathbf{yy}}=\mathbb{E}[\mathbf{y}(t)\mathbf{y}^T(t)]$, which can only be estimated in a centralized setting where the data from all the nodes are collected in a single fusion center, allowing to estimate the correlation between any two channel pairs of $\mathbf{y}$. One of the key strengths of our proposed DASF framework is that it avoids such a data centralization (in the sense that there is never a node which has access to all the channels of $\mathbf{y}$, i.e., $R_{\mathbf{yy}}$ cannot be constructed), while still achieving the solution of the centralized problem.

\begin{table}
  \renewcommand{\arraystretch}{2.2}
  \caption{(D)SFO problems that are special cases of (\ref{eq:prob_g})}
  \label{tab:ex_prob}
  \begin{tablenotes}
    \item \textit{$\mathbf{y}$, $\mathbf{v}$ and $\mathbf{d}$ are multi-variate stochastic processes (signals). TRO is the trace ratio optimization problem and RR represents the ridge regression method. In the CCA case, the minimization is done with respect to $X$ and $W$.}
  \end{tablenotes}
  \begin{tabularx}{\columnwidth}{|s|b|z|}
  \hline
   Problem & Cost function to minimize & Constraints \\ \hhline{|=|=|=|}
   LCMV \cite{markovich2012distributed,bertrand2013dlcmv,markovich2015optimal} & $\mathbb{E}[||X^T\mathbf{y}(t)||^2]$ & $X^TB=H$ \\ \hline
   PCA \cite{bertrand2014dacmee} & $-\mathbb{E}[||X^T\mathbf{y}(t)||^2]$ & $X^TX=I_Q$ \\ \hline
   GEVD \cite{bertrand2015dacgee} & $-\mathbb{E}[||X^T\mathbf{y}(t)||^2]$ & $\mathbb{E}[X^T\mathbf{v}(t)\mathbf{v}^T(t)X]=I_Q$ \\ \hline
   TRO \cite{musluoglu2021distributed} & $-\frac{\mathbb{E}[||X^T\mathbf{y}(t)||^2]}{\mathbb{E}[||X^T\mathbf{v}(t)||^2]}$ & $X^TX=I_Q$ \\ \hline
   LS/MMSE /MWF \cite{doclo2009reduced,bertrand2010danse} & $\mathbb{E}[||\mathbf{d}(t)-X^T\mathbf{y}(t)||^2]$ & $X\in\mathbb{R}^{M\times Q}$ \\ \hline
   RR & $\mathbb{E}[||\mathbf{d}(t)-X^T\mathbf{y}(t)||^2]$ & $\text{tr}(X^TX)\leq \alpha^2$ \\ \hline
   CCA \cite{bertrand2015cca} & $-\mathbb{E}[\text{tr}(X^T\mathbf{y}(t)\mathbf{v}^T(t)W)]$ & \makecell{$\mathbb{E}[X^T\mathbf{y}(t)\mathbf{y}^T(t)X]=I_Q$\\$\mathbb{E}[W^T\mathbf{v}(t)\mathbf{v}^T(t)W]=I_Q$}  \\ \hline
  \end{tabularx}
\end{table}

Additionally, we consider inner products of the form $X^TB$, where $B\in\mathbb{R}^{M\times L}$ is a deterministic (i.e., fixed and time-independent) matrix. The LCMV example in Table \ref{tab:ex_prob} is an example where such an inner product with a deterministic matrix appears. Similar to $\mathbf{y}$ in (\ref{eq:y_part}), this term is defined as 
\begin{equation}\label{eq:B_part}
  B=[B_1^T,\dots,B_K^T]^T,
\end{equation}
where we only require that $B_k$ is known to node $k$. We note that the argument $B$ allows a deterministic representation of $\mathbf{y}$ in which multiple time samples of $\mathbf{y}$ are stored in the columns of $B$. Nevertheless, we make a distinction between both expressions to emphasize time-adaptive properties of the algorithm (see Section \ref{sec:practical_E}).

It is noted that some of the problems in Table \ref{tab:ex_prob} involve a second signal $\mathbf{v}:\mathbf{v}(t)=[\mathbf{v}_1^T(t),\dots,\mathbf{v}_K^T(t)]^T$ collected by the WSN (e.g., in the case of GEVD and CCA), and possibly another filter $W$ to be optimized (e.g., in the case of CCA). This additional signal could either be derived from the same set of sensors (e.g., a time-lagged version of $\mathbf{y}$ as in \cite{bertrand2015cca}, or observing $\mathbf{y}$ during two different regimes as in \cite{bertrand2015dacgee,wang2006common}), or they could come from different types of sensors with which the nodes are equipped.  In the remaining parts of this paper, we will typically consider the case of a single filter $X$, a single observed (multi-channel) sensor signal $\mathbf{y}$ and a single deterministic parameter $B$, yet all results can be easily generalized to multiple signals, parameters or filters. This generalization will be briefly addressed at the end of the next subsection and in Section \ref{sec:mult_var}.

It is important to note here that every other quantity of the problem that is not represented by an inner product with $X$ is assumed to be available at each node (e.g., $H$ and $\mathbf{d}$ in the LCMV and LS / MMSE / MWF examples in Table \ref{tab:ex_prob}). This means that each node is able to evaluate the objective and constraint functions of the optimization problem solved over the network if it has access to $X^T\mathbf{y}$ and $X^TB$.

\subsection{Scope of Signal Fusion Optimization Problems}\label{sec:opt_prob}
We first provide a generic description of the signal fusion optimization (SFO) problems that will be covered in this paper. While this description may seem rather exotic at first, we will provide several examples throughout the paper to illustrate how it contains many familiar problems as a special case. The SFO problems studied in this paper can be written in the following way:
\begin{equation}\label{eq:prob_g}
  \begin{aligned}
  \underset{X\in\mathbb{R}^{M\times Q}}{\text{minimize } } \quad & \varphi\left(X^T\mathbf{y}(t),X^TB\right)\\
  \textrm{subject to} \quad & \eta_j\left(X^T\mathbf{y}(t),X^TB\right)\leq 0,\;\textrm{ $\forall j\in\mathcal{J}_I$,}\\
    & \eta_j\left(X^T\mathbf{y}(t),X^TB\right)=0,\;\textrm{ $\forall j\in\mathcal{J}_E$,}
  \end{aligned}
\end{equation}
where $\varphi$ and the $\eta_j$'s are differentiable scalar- and real-valued functions, and the sets $\mathcal{J}_I$ and $\mathcal{J}_E$ represent the index sets of inequality and equality constraints respectively. Additionally, we define $\mathcal{J}=\mathcal{J}_I\cup\mathcal{J}_E$, and the number of constraints in total is given by  $J=|\mathcal{J}|$. 

An important observation is that $X$ always appears in an inner product with $\mathbf{y}$ or $B$, which corresponds to a signal fusion or spatial filtering operation. Furthermore, note that the functions that contain a stochastic signal $\mathbf{y}$ as an argument must contain an operator to translate this stochastic variable into a deterministic loss or constraint function (i.e., the functions in instances of Problem (\ref{eq:prob_g}) are deterministic, as any stochastic variable is converted into a deterministic value), for example through the use of an expectation operator (see Table \ref{tab:ex_prob}). In most practical cases, including those mentioned in Table \ref{tab:ex_prob}, the evaluation of $\varphi$ requires the knowledge or estimation of the network-wide spatial covariance matrix $R_{\mathbf{yy}}=E[\mathbf{y}(t)\mathbf{y}^T(t)]$. In this work, we assume that this matrix is unknown, in which case the spatial correlation between the nodes should be learned on the fly by the proposed distributed algorithm. 

The formulation (\ref{eq:prob_g}) covers a wide range of popular spatial filtering and signal processing problems, including those shown in Table \ref{tab:ex_prob}. For example, for the LCMV case (first example in Table \ref{tab:ex_prob}), we have $\varphi(X^T\mathbf{y}(t))=\mathbb{E}[||X^T\mathbf{y}(t)||^2]$, and $\eta_j(X^TB)=[X^TB-H]_j$ for each element $[X^TB-H]_j$ of the matrix $X^TB-H$. Note that quadratic terms of the form $X^TX$, which appear in some problems in Table \ref{tab:ex_prob}, should be seen as $(X^TB)\cdot(X^TB)^T$ with $B=I_M$. The reader is also referred to Section \ref{sec:ex_sim} in which a few extra examples are provided.

Finally, to simplify notation in various parts of this paper, we also define the differentiable functions $f$ and $h_j$'s, replacing $\varphi$ and $\eta_j$'s respectively, to describe Problem (\ref{eq:prob_g}) as a function of $X$ only, in which case $\mathbf{y}$ and $B$ should be viewed as internal function parameters:
\begin{equation}\label{eq:f_simplify}
  \begin{aligned}
    f(X)&\triangleq\varphi\left(X^T\mathbf{y}(t),X^TB\right),\\
    h_j(X)&\triangleq\eta_j\left(X^T\mathbf{y}(t),X^TB\right),\;\forall j\in\mathcal{J}.
  \end{aligned}
\end{equation}
Furthermore, we denote the constraint set of (\ref{eq:prob_g}) as $\mathcal{S}$, the complete solution set as $\mathcal{X}^*$ and a single solution as $X^*$, i.e., $X^*\in\mathcal{X}^*$.

\textbf{Note on further generalizations:} The problem description (\ref{eq:prob_g}) considers only one argument of each type $(X^T\mathbf{y}(t),X^TB)$ involving only one filter variable $X$, one stochastic signal $\mathbf{y}$ and one deterministic matrix $B$. However, this is merely for conciseness and intelligibility of the description of our framework, i.e., the framework can be straightforwardly generalized to multiple versions of variables and each of the two arguments in (\ref{eq:prob_g}), e.g., to also cover the cases of GEVD and CCA in Table \ref{tab:ex_prob}. Formally, the full scope of SFO problems we consider is
\begin{equation}\label{eq:prob_g_full}
  \begin{aligned}
  \underset{X^{(a)},\forall a}{\text{minimize } } & \varphi\left(X^{(a)T}\mathbf{y}^{(b)}(t),X^{(a)T}B^{(c)},...\right),\;\forall a,b,c\\
  \textrm{subject to } & \eta_j\left(X^{(a)T}\mathbf{y}^{(b)}(t),X^{(a)T}B^{(c)},...\right)\leq 0\;\textrm{ $\forall j\in\mathcal{J}_I$,}\\
    & \eta_j\left(X^{(a)T}\mathbf{y}^{(b)}(t),X^{(a)T}B^{(c)},...\right)=0\;\textrm{ $\forall j\in\mathcal{J}_E$.}
  \end{aligned}
\end{equation}
Additionally, even though we restrict ourselves to real-valued arguments, the results can be extended to complex-valued arguments (with real-valued cost functions) based on standard techniques such as those explained in \cite{hjorungnes2007complex,adali2014optimization,sayed2014adaptation}. Further extensions are possible where each node minimizes a different function, as in \cite{bertrand2010danse,bertrand2011lcdanse}, but this is beyond the scope of this paper.

\subsection{Adaptivity and Approximation in Practical Settings}\label{sec:practical_E}
The problems we are interested in typically involve an expectation operator over random signals with unknown distributions. In practical settings, the expectation operators in the objective and constraint functions of (\ref{eq:prob_g}) are usually approximated using sample averages \cite{van1988beamforming,blankertz2007optimizing,reed1974rapid,valaee1999localization,davila1994efficient}. The expectation operator over the distribution of $\mathbf{y}(t)$ for a generic deterministic function $g$ taking $\mathbf{y}(t)$ as an argument is then approximated\footnote{Convergence of the approximation to the true expectation for $N\rightarrow +\infty$ is studied in the stochastic optimization literature. We refer the reader to the sample average approximation method in particular \cite{kim2015guide}.} as
\begin{equation}\label{eq:approx_E}
  \mathbb{E}[g(\mathbf{y}(t))]\approx G(Y(t))\triangleq\frac{1}{N}\sum_{\tau=t}^{t+N-1}g(\mathbf{y}(\tau)),
\end{equation}
where $Y(t)=[\mathbf{y}(t),\dots,\mathbf{y}(t+N-1)]$ denotes a matrix that contains $N$ observations of $\mathbf{y}$, starting with the observation at time sample $t$. Here, it is assumed that the signal $\mathbf{y}$ is ergodic such that the expectation operators can be accurately approximated using (\ref{eq:approx_E}). In particular for the second-order statistics, which are commonly encountered in signal processing problems (see Table \ref{tab:ex_prob}), the approximation (\ref{eq:approx_E}) results in
\begin{equation}
  \mathbb{E}[\mathbf{y}(t)\mathbf{y}^T(t)]\approx\frac{1}{N}Y(t)Y^T(t).
\end{equation}
Table \ref{tab:ex_prob_E} gives the practical approximations of some commonly encountered functions in problems of interest (including expressions found in Table \ref{tab:ex_prob}) using (\ref{eq:approx_E}). Note that the stationarity assumption removes any time dependence from the problems of interest such that any window of contiguous time samples of $\mathbf{y}$ can be used. 

Throughout this paper, we assume stationarity for mathematical tractability, which is a common assumption in the analysis of adaptive filters \cite{van1988beamforming,diniz2019adaptive,sayed2011adaptive,sayed2014adaptive,chen2003asymptotic}. However, in practice, we do not require the signals to be fully stationary, as long as the dynamics in the underlying signal statistics are sufficiently slow, such that (\ref{eq:approx_E}) gives a reasonable approximation. In this case, the solution $X^*$ of (\ref{eq:prob_g}) becomes time-dependent. When the DASF framework is applied in such an adaptive context, the targeted instance of Problem (\ref{eq:prob_g}) is effectively replaced by its sample average counterpart, where the statistics are regularly re-estimated in a block-based fashion, i.e., each time a new block of $N$ samples becomes available. As we will explain in Sections \ref{sec:fc_dsf} and \ref{sec:ti_dsf}, every new block of $N$ samples will initiate one new iteration of the DASF algorithm, i.e., the iterations of DASF are spread over different sample blocks, such that the algorithm becomes time-recursive (implicitly assuming $G(Y(t))\approx G(Y(t+N))$). We will demonstrate in Section \ref{sec:adaptivity} through an example that the proposed DASF algorithm can indeed track slow changes in the signal statistics, and is able to recover from abrupt changes.

\begin{table}
  \renewcommand{\arraystretch}{2}
  \caption{Practical Approximations of Common Functions}
  \label{tab:ex_prob_E}
  \begin{tablenotes}
    \item \textit{Approximations computed in practice to evaluate some commonly encountered functions using (\ref{eq:approx_E}) and $N$ observations of the stochastic signals, e.g., $\{\mathbf{y}(\tau)\}_{\tau=t}^{t+N-1}$. $Y(t)$ denotes the sample matrix of $\mathbf{y}$ for $N$ observations, $Y=[\mathbf{y}(t),\dots,\mathbf{y}(t+N-1)]$, while $V$ and $D$ are similarly defined for $\mathbf{v}$ and $\mathbf{d}$ respectively.}
  \end{tablenotes}
  \begin{tabularx}{\columnwidth}{|y|x|}
  \hline
   $\mathbb{E}[g(\mathbf{y}(t))]$ & $G(Y(t))$ \\ \hhline{|=|=|}
   $\mathbb{E}[||X^T\mathbf{y}(t)||^2]$ & $||X^TY(t)||_F^2/N$\\ \hline
   $\mathbb{E}[X^T\mathbf{y}(t)\mathbf{y}^T(t)X]$ & $X^TY(t)Y^T(t)X/N$ \\ \hline
   $\mathbb{E}[||\mathbf{d}(t)-X^T\mathbf{y}(t)||^2]$ & $||D(t)-X^TY(t)||^2_F/N$ \\ \hline
   $\mathbb{E}[\text{tr}(X^T\mathbf{y}(t)\mathbf{v}^T(t)W)]$ & $\text{tr}(X^TY(t)V^T(t)W)/N$  \\ \hline
  \end{tabularx}
\end{table}

\subsection{General Assumptions}\label{sec:asmp}
In order for our DASF algorithm to be applicable and achieve convergence and optimality, the problem (\ref{eq:prob_g}) must satisfy some sufficient conditions, which are listed in this subsection for completeness, while the technical details are explained in the companion paper \cite{musluoglu2022unifiedp2}. It is noted that these conditions are usually satisfied for all examples listed in Table \ref{tab:ex_prob} (except in some contrived cases) and we refer the reader to \cite{musluoglu2022unifiedp2} for some examples on how these conditions can be checked in practice. In addition to these sufficient conditions, we restate the implicit assumption from Section \ref{sec:opt_prob} that the functions $f$ and $h_j$ in (\ref{eq:f_simplify}) are smooth functions, i.e., they are continuously differentiable over the variable $X$ on their respective domain, or equivalently, the functions $\varphi$ and $\eta_j$ are continuously differentiable over $X$ for any $\mathbf{y}$ and $B$.

\begin{asmp}\label{asmp:well_posed}
  The targeted instance of Problem (\ref{eq:prob_g}) is well-posed\footnote{The notion of (generalized Hadamard) well-posedness we require is based on \cite{hadamard1902problemes,zhou2005hadamard}. The main difference is that we require the map from the parameter (inputs of the problem) space to the solution space to be continuous instead of upper semicontinuous, which is required for the convergence proof of the DASF algorithm. These technical details are presented in \cite{musluoglu2022unifiedp2}.}, in the sense that the solution set is not empty and varies continuously with a change in the parameters of the problem.
\end{asmp}

\noindent Note that since in practice, the DASF algorithm will be used for solving a particular instance of Problem (\ref{eq:prob_g}), Assumption \ref{asmp:well_posed} is only required for that particular problem and not all problems within the scope of the framework. As an example, consider the PCA problem of Table \ref{tab:ex_prob}. It can be shown that the PCA problem satisfies Assumption \ref{asmp:well_posed} if the covariance matrix of $\mathbf{y}$ is positive definite and its $Q+1$ largest eigenvalues are all distinct.

\begin{asmp}\label{asmp:kkt}
  The linear independence constraint qualifications (LICQ) hold at the solutions of Problem (\ref{eq:prob_g}), i.e., the solutions satisfy the Karush-Kuhn-Tucker (KKT) conditions.
\end{asmp}
\noindent If $X^*$ is a solution of Problem (\ref{eq:prob_g}), then Assumption \ref{asmp:kkt} implies that the gradients $\nabla_X h_j(X^*)$, $j\in\mathcal{J}^*$, are linearly independent\footnote{A set of matrices $\{A_j\}_{j\in\mathcal{J}}$ is linearly independent when $\sum_{j\in\mathcal{J}}\alpha_jA_j=0$ is satisfied if and only if $\alpha_j=0$, $\forall j\in\mathcal{J}$, or equivalently, when $\{\text{vec}(A_j)\}_{j\in\mathcal{J}}$ is a set of linearly independent vectors, where $\text{vec}(\cdot)$ is the vectorization operator.}, where $\mathcal{J}^*\subseteq \mathcal{J}$ is the set of all indices $j$ for which $h_j(X^*)=0$. We refer the reader to \cite{peterson1973review} for further details on constraint qualifications. If there is no constraint function $\eta_j$ in Problem (\ref{eq:prob_g}), Assumption \ref{asmp:kkt} implies that $\nabla_X f(X^*)=0$.
\begin{asmp}
  $f$ has compact sublevel sets in $\mathcal{S}$, i.e., for all $m\in\mathbb{R}$, $\{X\in\mathcal{S}\;|\;f(X)\leq m\}$ is compact.
\end{asmp}
\noindent Note that in $\mathbb{R}^{M\times Q}$, compactness is equivalent to closedness and boundedness of a set, which is a relatively mild condition. In fact, as is shown in \cite{musluoglu2022unifiedp2}, the assumption can be further relaxed by only requiring that at least one sublevel set $\{X\in\mathcal{S}\;|\;f(X)\leq f(X^0)\}$ is compact, where $X^0$ is the initialization point of the DASF algorithm. 

Moreover, the convergence proof in \cite{musluoglu2022unifiedp2} requires an additional sufficient condition that is akin to the LICQ. We postpone its definition to \cite{musluoglu2022unifiedp2} because of its technical nature, and because it is a relatively mild condition, which is generally satisfied in practice. Nevertheless, an important implication of this condition, which is relevant to disclose at this point, will be that it imposes an upper bound on the number of constraints $J$ the problem is allowed to have (see \cite{musluoglu2022unifiedp2}):
\begin{equation}\label{eq:upper_bound}
  J\leq \min\left(\frac{Q^2}{K-1}\sum_{k\in\mathcal{K}}|\mathcal{N}_k|,\;(1+\min_{k\in\mathcal{K}}|\mathcal{N}_k|)Q^2\right).
\end{equation}
Here, $K$ is the total number of nodes, $|\mathcal{N}_k|$ is the number of neighbors of node $k$, and $Q$ is the number of columns of $X$.

We also make the implicit assumption that a (centralized) algorithm is available to solve (with arbitrary accuracy) the targeted problem instance, i.e., the instance of (\ref{eq:prob_g}) for which we aim to design a distributed algorithm. This is a reasonable premise, since it makes little sense to design a distributed algorithm for a problem that cannot even be solved in a centralized setting. Nevertheless, it is important as our DASF framework will use the same solver to find solutions of compressed versions of the targeted problem at each node, as will be explained next. We note that there are no restrictions on the solver, which can be chosen freely (e.g., closed-form solutions, steepest descent methods, interior point methods, trust region methods, etc.).

\section{Distributed Adaptive Signal Fusion in Fully-Connected Networks (FC-DASF)}\label{sec:fc_dsf}

For the sake of an easier exposition, let us first consider the special case of a fully-connected network where data transmitted by any node is received by every other node (more general topologies are treated in Section \ref{sec:ti_dsf}). As the optimization problem (\ref{eq:prob_g}) depends on the full signal $\mathbf{y}$ and its (unknown) spatial correlation across the nodes, the nodes would need to share some information between each other. However, sharing the full signal $\mathbf{y}$ would require significant bandwidth and consume a large amount of power. Instead, each node will linearly compress its local $M_k-$channel signal into an $R-$channel signal (with $R<M_k$) before broadcasting it to the other nodes in the network. At iteration $i$, the compression is done by multiplying the signal $\mathbf{y}_k$ at node $k$ with an $M_k\times R$ matrix $F_k^i$, which we refer to as the local compressor at node $k$. The $R-$dimensional compressed signal resulting from this compressive spatial filtering can be written as
\begin{equation}\label{eq:compress}
  \widehat{\mathbf{y}}_k^i(t)\triangleq F_k^{iT}\mathbf{y}_k(t).
\end{equation}
The deterministic parameters $B_k$ are similarly compressed to obtain $\widehat{B}_k^i$, which are also broadcast between nodes.
The index $i$ emphasizes that these compressors are not constant and will be iteratively updated across time. Note that the iteration index $i$ is also added to the stochastic signal $\widehat{\mathbf{y}}_k$ in order to indicate that the content of the signal (and hence its statistics) changes in each iteration due to an update of the underlying compression matrix. However, this does not imply that we iterate over a single batch of samples of this signal, i.e., an update from $\widehat{\mathbf{y}}_k^i$ to $\widehat{\mathbf{y}}_k^{i+1}$ (or $F_k^i$ to $F_k^{i+1}$) only affects future samples of $\widehat{\mathbf{y}}_k$ that are collected after performing the update. As a result, the compressor $F_k$ operates as a block-adaptive filter, which updates its coefficients after every new block of $N$ samples, such that $F_k^i$ operates on the samples in the data matrix $Y(iN)$, whereas $F_k^{i+1}$ operates on the samples in $Y(iN+N)$, where $Y(t)$ is defined as in (\ref{eq:approx_E}). In other words, an update of the compressor $F_k$ affects how future samples of $\mathbf{y}_k$ are compressed, yet previously collected samples will not be re-compressed or re-broadcast.

Equation (\ref{eq:compress}) results in a compression ratio of $M_k/R$. The compression implies that the nodes do not have full access to the network-wide signal $\mathbf{y}$. Nevertheless, we will show that an optimal solution $X^*\in\mathcal{X}^*$ of Problem (\ref{eq:prob_g}) can be achieved if $R=Q$, where $Q$ is the number of columns of $X$, i.e., the number of output signals of the filter $X$ (in many cases only a single-channel output is desired, such that $R=Q=1$). In a fully-connected network, this implies that we must assume that $Q<M_k$ in order to achieve a bandwidth reduction at node $k$. However, in the case of more general topologies, we can also achieve this for $Q\geq M_k$ (see Section \ref{sec:ti_dsf}).

After introducing the DASF algorithm for (\ref{eq:prob_g}) in the next subsection, we will provide insights on the relationship between the network-wide problem and the problems solved at each node in Section \ref{sec:central_local}. Extensions to the more general form (\ref{eq:prob_g_full}) will be presented in Section \ref{sec:mult_var}.

\subsection{Algorithm Derivation}\label{sec:algo_desc}
Consider the partitioning of the optimization variable $X$ in per-node sub-blocks, i.e.,
\begin{equation}\label{eq:X_part}
  X=[X_1^T,\dots,X_K^T]^T,
\end{equation}
where $X_k\in\mathbb{R}^{M_k\times Q}$. This way, every $X_k$ has a corresponding local signal $\mathbf{y}_k$, i.e., the part of $X$ that is applied to $\mathbf{y}_k$ in the expression $X^T\mathbf{y}$, such that we can write
\begin{equation}
  X^T\mathbf{y}(t)=\sum_{k\in\mathcal{K}}X_k^T\mathbf{y}_k(t).
\end{equation}
A similar observation for the local deterministic terms $B_k$ implies that we can write the objective $\varphi$ using the local filters and data:
\begin{align}
  f(X)&=\varphi(X^T\mathbf{y}(t),X^TB)\nonumber \\
  &=\varphi\left(\sum_{k\in\mathcal{K}}X_k^T\mathbf{y}_k(t),\sum_{k\in\mathcal{K}}X_k^TB_k\right).
\end{align}
The constraint functions $\eta_j$ can also be written in a similar way. Therefore, we are able to express the full optimization problem (\ref{eq:prob_g}) using $X_k$'s, $\mathbf{y}_k$'s and $B_k$'s. The main idea behind the DASF framework is to partially reconstruct and solve a compressed version of Problem (\ref{eq:prob_g}) at any selected node, which is referred to as the ``updating node'', and which changes from iteration to iteration. Let us now set $R=Q$ and
\begin{equation}\label{eq:compressor}
  F_k^i=X_k^i,
\end{equation}
where $X_k^i$ corresponds to the local estimate of $X_k$ at node $k$ at iteration $i$. Stacking them together as in (\ref{eq:X_part}), we obtain the estimate $X^i$ of the global variable $X$ at iteration $i$. This means that, within the DASF algorithm, the $X_k^i$'s act both as compressors and as part of the optimization variable. Combining (\ref{eq:compress}) and (\ref{eq:compressor}), the compressed signal that is broadcast by node $k$ can be written as
\begin{equation}\label{eq:y_hat}
  \widehat{\mathbf{y}}^i_k(t)\triangleq X_k^{iT}\mathbf{y}_k(t)\in\mathbb{R}^{Q},
\end{equation}
which implies that the network-wide filter output at iteration $i$ can be computed as the sum of the compressed signals in (\ref{eq:y_hat}), i.e.,
\begin{equation}\label{eq:y_Xy}
  \widehat{\mathbf{y}}^i(t)\triangleq X^{iT}\mathbf{y}(t)=\sum_{k\in\mathcal{K}}X_k^{iT}\mathbf{y}_k(t)=\sum_{k\in\mathcal{K}}\widehat{\mathbf{y}}_k^i(t).
\end{equation}
In a similar way, the compressed deterministic terms at node $k$ are
\begin{equation}\label{eq:B_hat}
  \widehat{B}^i_k\triangleq X_k^{iT}B_k\in\mathbb{R}^{Q\times L},
\end{equation}
and we have
\begin{equation}\label{eq:B_XB}
  \widehat{B}^i\triangleq X^{iT}B=\sum_{k\in\mathcal{K}}X_k^{iT}B_k=\sum_{k\in\mathcal{K}}\widehat{B}_k^i.
\end{equation}
Since the network is assumed to be fully-connected, each node has access to (observations of) all signals $\widehat{\mathbf{y}}_k^i$, such that each node can compute the filter outputs (\ref{eq:y_Xy}). Let $q$ be the updating node at iteration $i$, and define the stacked vector containing all available signals at node $q$ as
\begin{equation}\label{eq:y_tilde}
  \widetilde{\mathbf{y}}_q^i(t)\triangleq[\mathbf{y}_q^T(t),\widehat{\mathbf{y}}_1^{iT}(t),\dots,\widehat{\mathbf{y}}_{q-1}^{iT}(t),\widehat{\mathbf{y}}_{q+1}^{iT}(t),\dots,\widehat{\mathbf{y}}_K^{iT}(t)]^T,
\end{equation}
which contains $\widetilde{M}_q\triangleq M_q+Q(K-1)$ channels. Note that node $q$ only has access to uncompressed observations of $\mathbf{y}_q$ and a corresponding batch of compressed observations of all the other nodes. Similarly, the matrix containing all available deterministic terms is obtained by stacking $B_q$ which is available at node $q$ and the compressed $\widehat{B}_k^i$'s received from other nodes:
\begin{equation}\label{eq:B_tilde}
  \widetilde{B}^i_q\triangleq [B_q^T,\widehat{B}_1^{iT},\dots,\widehat{B}_{q-1}^{iT},\widehat{B}_{q+1}^{iT},\dots,\widehat{B}_K^{iT}]^T,
\end{equation}
which is an $\widetilde{M}_q\times L$ matrix.
Based on (\ref{eq:y_tilde}) and (\ref{eq:B_tilde}), we define a new local variable $\widetilde{X}_q\in\mathbb{R}^{\widetilde{M}_q\times Q}$ at node $q$, such that we are able to formulate a local optimization problem using only data available at node $q$ at iteration $i$:
\begin{equation}\label{eq:loc_prob_g}
  \begin{aligned}
    \underset{\widetilde{X}_q\in\mathbb{R}^{\widetilde{M}_q\times Q}}{\text{minimize } } \quad & \varphi(\widetilde{X}_q^T\widetilde{\mathbf{y}}_q^i(t),\widetilde{X}_q^T\widetilde{B}_q^i)\\
  \textrm{subject to} \quad & \eta_j(\widetilde{X}_q^T\widetilde{\mathbf{y}}_q^i(t),\widetilde{X}_q^T\widetilde{B}_q^i)\leq 0\;\textrm{ $\forall j\in\mathcal{J}_I$},\\
   & \eta_j(\widetilde{X}_q^T\widetilde{\mathbf{y}}_q^i(t),\widetilde{X}_q^T\widetilde{B}_q^i)=0\;\textrm{ $\forall j\in\mathcal{J}_E$}.
  \end{aligned}
\end{equation}
A key observation here is the similarity between (\ref{eq:loc_prob_g}) and (\ref{eq:prob_g}). This means that node $q$ can locally apply the same solver as the one used for solving the centralized problem, albeit on a problem of smaller size. Note that this implies that the computational cost required to solve Problem (\ref{eq:loc_prob_g}) is smaller compared to solving (\ref{eq:prob_g}). 

At iteration $i$, node $q$ solves the local problem (\ref{eq:loc_prob_g}), and we denote its solution as $\widetilde{X}_q^{i+1}$, which can be partitioned as
\begin{align}\label{eq:X_tilde_i1}
  \widetilde{X}_q^{i+1}&=\\ \nonumber
  &[X_q^{(i+1)T},G_1^{(i+1)T},\dots,G_{q-1}^{(i+1)T},G_{q+1}^{(i+1)T},\dots,G_K^{(i+1)T}],
\end{align}
where $X_q^{i+1}$ is $M_q\times Q$ and each $G_k^{i+1}$ is $Q\times Q$. By comparing (\ref{eq:X_tilde_i1}) with (\ref{eq:y_tilde}), we see that $G_k^{i+1}$ refers to the part of $\widetilde{X}_q$ that is multiplied with the received compressed data $\widehat{\mathbf{y}}_k^{i}$ from node $k$ in the inner product $\widetilde{X}_q^{T}\widetilde{\mathbf{y}}_q^{i}(t)$ in (\ref{eq:loc_prob_g}). Since $\widehat{\mathbf{y}}_k^i(t)=X_k^{iT}\mathbf{y}_k(t)$, we can instead multiply the compressor $X_k^i$ at node $k$ with this matrix $G_k^{i+1}$. As a result, each node $k$ in the network updates its local $X_k$ as
\begin{equation}\label{eq:full_upd}
  X_k^{i+1}=\begin{cases}X_q^{i+1} & \text{if $k=q$}\\
  X_k^{i}G_k^{i+1} & \text{if $k\neq q$,}\end{cases}
\end{equation}
where $X_q^{i+1}$ and $G_k^{i+1}$ are obtained from the partitioning (\ref{eq:X_tilde_i1}) of $\widetilde{X}_q^{i+1}$. Since the updating node $q$ does not have access to the $X_k^i$ of the other nodes $k\neq q$, it needs to communicate the matrices $G_k^{i+1}$ to the other nodes, so that they can update their local $X_k$ as well\footnote{Note that the communication cost to transmit these $G_k$ matrices is negligible compared to the transmission of a batch of observations of $\widehat{\mathbf{y}}_k$'s (see also Remark \ref{rem:batch}).}. A block diagram of this process is provided in Figure \ref{fig:block_diagram}.

If the minimization (\ref{eq:loc_prob_g}) has multiple solutions, an ambiguity exists on the choice of the local variable, which can be resolved by selecting a specific solution at each iteration. We propose to select the solution $\widetilde{X}_q^{i+1}$ for which the distance $||\widetilde{X}_q^{i+1}-\widetilde{X}_q^i||_F$ is minimal, where $\widetilde{X}_q^i$ is defined as
\begin{equation}\label{eq:X_fixed}
  \widetilde{X}_q^{i}=[X_q^{iT},I_Q,\dots,I_Q]^T.
\end{equation}
The choice of the Frobenius norm $||\cdot||_F$ as a distance metric is arbitrary. Other distance functions $d$ can also be used (and might be better suited for the specific instance of Problem (\ref{eq:prob_g}) at hand), as long as they are continuous and satisfy $d(X,Y)=0\iff X=Y$. These conditions on $d$ are needed for the convergence of the proposed method, as explained in \cite{musluoglu2022unifiedp2}.

\begin{figure}[t]
  \includegraphics[width=0.48\textwidth]{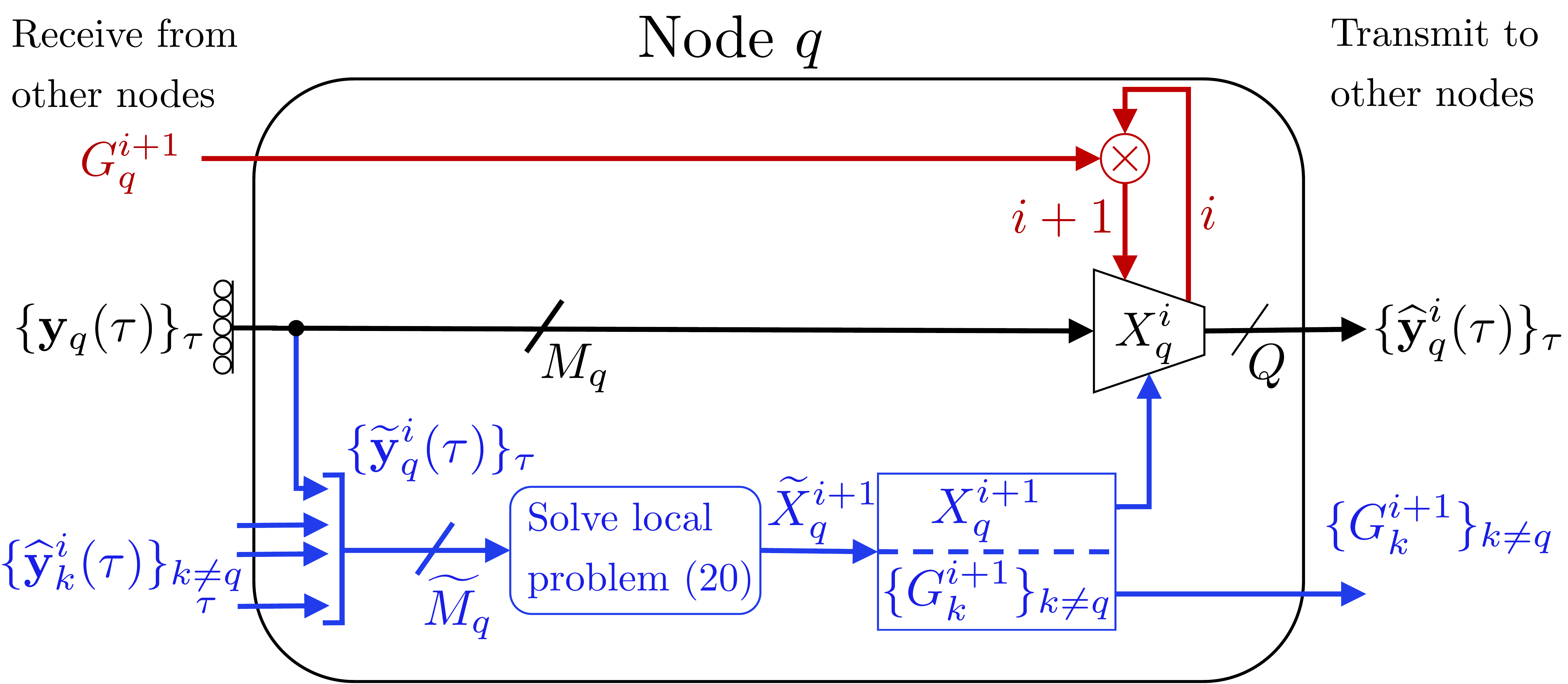}
  \caption{Block diagram representation of the steps followed by a given node $q$ in the FC-DASF algorithm. The black part is executed for any sample time $t$ at each node. The red and blue parts are only executed at each iteration increment $i\rightarrow i+1$, where the blocks in blue are only executed when node $q$ is the updating node. Otherwise, the part in red is carried out. Node $q$ has always access to its own signal samples $\mathbf{y}_q(t)$ measured at its own sensors (represented by rings, in black), while the compressed signal samples $\widehat{\mathbf{y}}_k^i(t)$ are transmitted to node $q$ by the respective nodes $k$ (represented by arrows in blue). For intelligibility, we omitted the data flow of the expression $X^TB$ from the diagram.}
  \label{fig:block_diagram}
\end{figure}

\begin{figure}[!t]
  \removelatexerror
  \DontPrintSemicolon
  \begin{algorithm}[H]
  \caption{Fully-Connected Distributed Adaptive Signal Fusion (FC-DASF) Algorithm\;
  Code available in \cite{musluoglu2022dsfotoolbox}}\label{alg:fc_dsfo}
  \SetKwInOut{Output}{output}
  \Output{$X^*$}
  \BlankLine
  Initialize $X^0$, $i\gets0$.\;
  \Repeat
  {
  Choose the updating node as $q\gets (i\mod K)+1$.\;
  1) Every node $k$ collects a new batch of $N$ samples of $\mathbf{y}_k$ (see Remark \ref{rem:batch}), compresses these to $N$ samples of $\widehat{\mathbf{y}}^{i}_k$ using (\ref{eq:y_hat}) and transmits them to node $q$.\;
  $\widehat{B}_k^i$ is computed using (\ref{eq:B_hat}) and transmitted to node $q$.\;

  \At{Node $q$}
  {
    2a) Compute $\widetilde{X}_q^{i+1}$ as the solution of (\ref{eq:loc_prob_g}). If the solution is not unique, select the solution which minimizes $||\widetilde{X}_q^{i+1}-\widetilde{X}_q^i||_F$ with $\widetilde{X}_q^i$ defined in (\ref{eq:X_fixed}).\;
    2b) Partition $\widetilde{X}^{i+1}_q$ as in (\ref{eq:X_tilde_i1}).\;
    2c) Transmit $G_k^{i+1}$ to node $k$ for every $k\neq q$.\;
  }
  
  3) Every node updates $X_k^{i+1}$ according to (\ref{eq:full_upd}).\;
  
  $i\gets i+1$\;
  }
  \end{algorithm}
  \small{\textbf{Note:} Each iteration uses a different batch of $N$ samples in step 1, i.e., the iterations can be spread over different time segments in order to avoid retransmitting the same batch of $N$ samples across the network (see also Remark \ref{rem:batch}). This makes the sample time index $t$ coupled to the iteration index $i$.}
  \hrule
\end{figure}

All the steps of the FC-DASF algorithm as explained above are summarized in Algorithm \ref{alg:fc_dsfo}. Algorithm \ref{alg:fc_dsfo} converges under mild technical conditions to an optimal filter $X^*$ solving the network-wide problem (\ref{eq:prob_g}). The convergence results with detailed analysis and proofs can be found in the companion paper \cite{musluoglu2022unifiedp2}.

\begin{rem}\label{rem:batch}
  It is noted that a transmission of the compressed signal $\widehat{\mathbf{y}}^i_k$ at node $k$ corresponds in practice to transmitting a batch with the $N$ most recent time samples of $\widehat{\mathbf{y}}^i_k$. This allows for the receiving node $q$ to estimate the necessary signal statistics to evaluate or optimize (\ref{eq:loc_prob_g}), where a larger value of $N$ results in a closer approximation of the true value (remember that the objective function in (\ref{eq:prob_g}) and (\ref{eq:loc_prob_g}) has a built-in operator to transform the stochastic signal $\mathbf{y}$ into a deterministic loss, which in practice is usually replaced with an average over $N$ samples, as in Table \ref{tab:ex_prob_E}). Leveraging the (short-term) stationarity assumption, different batches of $N$ samples are used in each iteration, such that the communication bandwidth becomes independent of the number of iterations. Therefore, Algorithm \ref{alg:fc_dsfo} behaves similarly to an adaptive filter which learns over time how to optimally filter newly observed samples (based on past samples). In a tracking context where the signal statistics of $\mathbf{y}$ change over time, the algorithm can still be applied if the statistics change slower than the convergence speed of the algorithm. 
  
  The DASF framework could in principle also be applied in a batch-mode (non-adaptive) framework, in which all operations are performed entirely on a single batch of samples (instead of spreading out the iterations over different sample batches of length $N$). In this case, the argument $X^T\mathbf{y}$ can be dropped and the argument $X^TB$ can be used to represent the batch of samples, in which all available samples of $\mathbf{y}$ are stored in the columns of $B$.
\end{rem}

\begin{rem}
  Although each node is able to communicate with every other node in a fully-connected network, Algorithm \ref{alg:fc_dsfo} is still distributed in nature. Indeed, the network-wide data is never centralized, i.e., the updating nodes $q$ have never access to $\mathbf{y}$ or $B$, but only to $\widetilde{\mathbf{y}}_q^i$ and $\widetilde{B}_q^i$, and therefore cannot estimate any network-wide statistics such as $\mathbb{E}[\mathbf{y}(t)\mathbf{y}^T(t)]$, whereas a centralized solver has access to this information.
\end{rem}

\subsection{Link between the Central and Local Problems}\label{sec:central_local}
In this subsection, we further explain the link between the central problem (\ref{eq:prob_g}) and the local problems (\ref{eq:loc_prob_g}) at the updating node $q$, where the latter can be viewed as a parameterized version of the former. This will provide some additional insights and show some useful properties of the DASF framework. Their relationship can be described by means of the transformation matrix:
\begin{equation}\label{eq:cqi}
  C_q^{i}=
\left[
{\renewcommand{\arraystretch}{1.8}%
\begin{array}{c|c|c}
0 & \Theta_{<q}^{i} & 0\\
\cline{1-3}
I_{M_q} & 0 & 0\\
\cline{1-3}
0 & 0 & \Theta_{>q}^{i}
\end{array}}
\right]\in\mathbb{R}^{M\times \widetilde{M}_q},
\end{equation}
where $\Theta_{<q}^i=\textit{BlkDiag}(X_1^i,\dots,X_{q-1}^{i})$ and $\Theta_{>q}^i=\textit{BlkDiag}(X_{q+1}^i,\dots,X_{q+1}^i)$. Then, from (\ref{eq:y_hat}) and (\ref{eq:y_tilde}), one can validate that
\begin{equation}\label{eq:compress_y}
  \widetilde{\mathbf{y}}_q^i(t)=C_q^{iT}\mathbf{y}(t),
\end{equation}
while (\ref{eq:B_hat}) and (\ref{eq:B_tilde}) result in
\begin{equation}\label{eq:compress_B}
  \widetilde{B}_q^i=C_q^{iT}B.
\end{equation}
Using these relationships in (\ref{eq:loc_prob_g}), we see that
\begin{align}\label{eq:equivalence}
  \varphi(\widetilde{X}_q^T\widetilde{\mathbf{y}}_q^i(t),\widetilde{X}_q^T\widetilde{B}_q^i)&=\varphi\left(\widetilde{X}_q^T(C_q^{iT}\mathbf{y}(t)),\widetilde{X}_q^T(C_q^{iT}B)\right)\nonumber \\
  &=\varphi\left((C_q^i\widetilde{X}_q)^T\mathbf{y}(t),(C_q^i\widetilde{X}_q)^TB\right)\nonumber \\
  &=f(C_q^i\widetilde{X}_q).
\end{align}
Similarly, we find that, $\forall j\in\mathcal{J}$:
\begin{align}\label{eq:equivalence_h}
  \eta_j(\widetilde{X}_q^T\widetilde{\mathbf{y}}_q^i(t),\widetilde{X}_q^T\widetilde{B}_q^i)&=\eta_j\left((C_q^i\widetilde{X}_q)^T\mathbf{y}(t),(C_q^i\widetilde{X}_q)^TB\right)\\ \nonumber
  &=h_j(C_q^i\widetilde{X}_q).
\end{align}
This implies that the local optimization variable $\widetilde{X}_q$ defines a parameterization of the global variable $X$. Indeed, if we define
\begin{equation}\label{eq:X_tilde_part}
  \widetilde{X}_q=[X_q^T,G_1^T,\dots,G_{q-1}^T,G_{q+1}^T,\dots,G_K^T]^T,
\end{equation}
where $X_q$ is $M_q\times Q$ and every $G_k$ is $Q\times Q$, we have at iteration $i$ (for the updating node $q$)

\begin{equation}\label{eq:X_param}
  X=C_q^i\widetilde{X}_q=\begin{bmatrix}
    X_1^i\boxed{G_1}\\\vdots\\X_{q-1}^i\boxed{G_{q-1}}\\\boxed{X_q}\\X_{q+1}^i\boxed{G_{q+1}}\\\vdots\\X_K^i\boxed{G_K}
\end{bmatrix}.
\end{equation}
Note that only the framed variables in (\ref{eq:X_param}) appear as optimization variables in the local problem (\ref{eq:loc_prob_g}), which is clear from comparing (\ref{eq:X_tilde_i1}) with (\ref{eq:X_tilde_part}). This shows that the updating node $q$ only has the full freedom to update $X_q$, i.e., its local compressor. The remaining parts of $X$ can only change up to a multiplication from the right by a matrix $G_k$, $k\neq q$ when it is node $q$'s turn to solve the local problem. Therefore, by sequentially changing the updating node across iterations, we allow every node to fully update its own local compressor while only manipulating the other sub-blocks of $X$ within their respective column spaces. 

The solution $\widetilde{X}_q^{i+1}$ of the local problem (\ref{eq:loc_prob_g}) at node $q$ and iteration $i$, which can also be written as
\begin{align}\label{eq:comp_X_tilde_parametric}
  \widetilde{X}_q^{i+1}\triangleq\;&\underset{\widetilde{X}_q\in\widetilde{\mathcal{S}}_q^i}{\text{argmin }}f\left(C_q^i\widetilde{X}_q\right),\nonumber \\
  =\;&\underset{\widetilde{X}_q\in\widetilde{\mathcal{S}}_q^i}{\text{argmin }}\varphi\left(\widetilde{X}_q^T\widetilde{\mathbf{y}}_q^i(t),\widetilde{X}_q^T\widetilde{B}_q^i\right),
\end{align}
where $\widetilde{\mathcal{S}}_q^i$ denotes the constraint set of (\ref{eq:loc_prob_g}), defines a new point $X^{i+1}$ for the global problem (\ref{eq:prob_g}) via (\ref{eq:X_param}). 

In the following lemma, we show that the global variable $X^i$ produced by the DASF algorithm\footnote{The proof of Lemma \ref{lem:X_in_constraints} also holds for the general topology-independent DASF algorithm in Section \ref{sec:ti_dsf}.} always satisfies the global constraint set $\mathcal{S}$ for any iteration $i>0$.
\begin{lem}\label{lem:X_in_constraints}
  For any iteration $i>0$,
  \begin{equation}\label{eq:loc_glob}
    \widetilde{X}_q\in\widetilde{\mathcal{S}}_q^i\iff C_q^i\widetilde{X}_q\in\mathcal{S}.
  \end{equation}
  In particular, $X^i\in\mathcal{S}$ and $\widetilde{X}_q^{i}\in\widetilde{\mathcal{S}}_q^i$ for all $i>0$.
\end{lem}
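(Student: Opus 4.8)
The plan is to obtain the biconditional (\ref{eq:loc_glob}) essentially for free from the constraint-function identity (\ref{eq:equivalence_h}) that has already been established, and then to deduce the two ``in particular'' statements from (i) the feasibility of whatever the local solver returns and (ii) a one-line computation showing that the reference point $\widetilde{X}_q^i$ of (\ref{eq:X_fixed}) is carried onto the current global iterate $X^i$ by $C_q^i$.

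First I would prove the equivalence. The sets $\mathcal{S}$ and $\widetilde{\mathcal{S}}_q^i$ are defined by the \emph{same} index sets $\mathcal{J}_I$ and $\mathcal{J}_E$: a point $\widetilde{X}_q$ lies in $\widetilde{\mathcal{S}}_q^i$ exactly when $\eta_j(\widetilde{X}_q^T\widetilde{\mathbf{y}}_q^i(t),\widetilde{X}_q^T\widetilde{B}_q^i)\leq 0$ for every $j\in\mathcal{J}_I$ and $=0$ for every $j\in\mathcal{J}_E$, whereas $C_q^i\widetilde{X}_q\in\mathcal{S}$ means $h_j(C_q^i\widetilde{X}_q)\leq 0$ (resp.\ $=0$) for the same indices. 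By (\ref{eq:equivalence_h}) these two families of scalars are equal term by term, so each inequality/equality on the left holds if and only if the corresponding one on the right holds, which is (\ref{eq:loc_glob}). This step uses nothing about the dynamics of the algorithm beyond the mere existence of the compressors $X_k^i$ ($k\neq q$) entering $C_q^i$, hence the equivalence is valid at every iteration $i\geq 0$.

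I would then read off the particular claims. For $X^i\in\mathcal{S}$ with $i>0$: the iterate $X^i$ was generated at the preceding iteration $i-1$, at which some node $p$ updated, so $X^i=C_p^{i-1}\widetilde{X}_p^i$ with $\widetilde{X}_p^i$ the minimizer of the local problem (\ref{eq:loc_prob_g}) over $\widetilde{\mathcal{S}}_p^{i-1}$; being a minimizer, it is in particular feasible, $\widetilde{X}_p^i\in\widetilde{\mathcal{S}}_p^{i-1}$, and the equivalence at iteration $i-1$ (node $p$) immediately gives $X^i\in\mathcal{S}$. For $\widetilde{X}_q^i\in\widetilde{\mathcal{S}}_q^i$ (now $q$ is the node updating at iteration $i$): inserting $\widetilde{X}_q^i=[X_q^{iT},I_Q,\dots,I_Q]^T$ from (\ref{eq:X_fixed}) into the parameterization (\ref{eq:X_param}) turns every off-node block $X_k^iG_k$ into $X_k^iI_Q=X_k^i$ and keeps the $q$-th block equal to $X_q^i$, so $C_q^i\widetilde{X}_q^i=X^i$; since $X^i\in\mathcal{S}$, the already-proved equivalence yields $\widetilde{X}_q^i\in\widetilde{\mathcal{S}}_q^i$.

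No step is a real obstacle here, since the lemma is fundamentally a bookkeeping statement; the only points that demand care are the index shifts (the solution returned at iteration $i-1$ is $\widetilde{X}_p^i$ and is feasible for $\widetilde{\mathcal{S}}_p^{i-1}$, not for $\widetilde{\mathcal{S}}_p^i$) and the verification that $C_q^i\widetilde{X}_q^i=X^i$. It is also worth flagging why the conclusion is stated for $i>0$ rather than $i\geq 0$: the initialization $X^0$ is chosen arbitrarily and may violate $\mathcal{S}$, and feasibility is injected into the sequence only after the first local problem is solved, which produces the first guaranteed-feasible iterate $X^1=C_p^0\widetilde{X}_p^1$.
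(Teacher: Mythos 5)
Your proof is correct and follows essentially the same route as the paper's: the equivalence (\ref{eq:loc_glob}) is read off term by term from (\ref{eq:equivalence_h}), feasibility of $X^i$ follows because it is the image under $C_p^{i-1}$ of the (feasible) local solution from the previous iteration, and $\widetilde{X}_q^i\in\widetilde{\mathcal{S}}_q^i$ follows from the identity $C_q^i\widetilde{X}_q^i=X^i$ together with the equivalence. The only cosmetic difference is that the paper argues forward (showing $X^{i+1}=C_q^i\widetilde{X}_q^{i+1}\in\mathcal{S}$ for all $i\geq 0$) while you argue backward from iteration $i$ to $i-1$; the content is identical.
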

\begin{proof}
  From (\ref{eq:equivalence_h}), it follows automatically that any point $\widetilde{X}_q$ in the constraint set of the local problem (\ref{eq:loc_prob_g}) has a corresponding point $X=C_q^{i}\widetilde{X}_q$ in the constraint set of the global problem (\ref{eq:prob_g}). This implies that, if $\widetilde{X}_q$ is a feasible point of (\ref{eq:loc_prob_g}), the point $X$ parameterized by $\widetilde{X}_q$, such that $X=C_q^i\widetilde{X}_q$, is a feasible point of (\ref{eq:prob_g}), and vice versa, which proves (\ref{eq:loc_glob}).

  From (\ref{eq:X_tilde_part})-(\ref{eq:X_param}), we find that the point $X^i$ (before the update at iteration $i$) is equal to $X^{i}=C_q^i\widetilde{X}_q^i$ with $\widetilde{X}_q^i$ defined in (\ref{eq:X_fixed}). Similarly, we know (by construction) that $X^{i+1}=C_q^i\widetilde{X}_q^{i+1}$. Since $\widetilde{X}_q^{i+1}$ is the solution of (\ref{eq:loc_prob_g}), $X^{i+1}$ must be a feasible point of (\ref{eq:prob_g}), which follows from (\ref{eq:loc_glob}). As this holds for all $i\geq 0$, $X^i$ is then also a feasible point of (\ref{eq:prob_g}), i.e., $X^i\in\mathcal{S}$, if $i>0$. Since $X^i=C_q^i\widetilde{X}_q^i$, and using (\ref{eq:loc_glob}), we find that $\widetilde{X}_q^i$ as defined in (\ref{eq:X_fixed}) is a feasible point of (\ref{eq:loc_prob_g}), i.e., $\widetilde{X}_q^i\in\widetilde{\mathcal{S}}_q^i$.
\end{proof}

The results of Lemma \ref{lem:X_in_constraints} mean that all points $(X^{i})_{i>0}$ generated by the algorithm will be in the constraint set of the global problem (\ref{eq:prob_g}). Additionally, the final result states that $\widetilde{X}_q^i$ itself is a feasible point of (\ref{eq:loc_prob_g}), which is important to achieve convergence and a monotonic decrease in $f$, as it allows the algorithm to stay in the current point $X^i$ if no reduction in $f$ can be obtained at node $q$ in iteration $i$, in which case $X^{i+1}=X^i$ (a formal proof is given in \cite{musluoglu2022unifiedp2}).

\begin{rem}
  The fact that the local problem (\ref{eq:loc_prob_g}) inherits the structure from the global problem (\ref{eq:prob_g}) is one of the key differences between the DASF framework and the nonlinear Gauss-Seidel method (sometimes referred to as the alternating optimization method), which would consist of only updating $X_q^i$ in (\ref{eq:prob_g}), while freezing the other $X_k^i$'s $\forall k\neq q$. In the latter case, the subproblem that has to be solved in each iteration typically has a different structure than the original one, often leading to problems that are more difficult to solve or for which a straightforward solver might not even exist. Moreover, the extra degrees of freedom to manipulate the $X_k$'s of other nodes through the $G_k$ matrices in the parameterization (\ref{eq:X_param}) allow to optimize $X^i$ over a larger subset of $\mathbb{R}^{M\times Q}$ in each individual iteration, leading to larger descents in the loss function $f$ in each iteration. In Gauss-Seidel methods, these $G_k$ matrices do not exist, i.e., all $X_k$'s are fixed except one.
\end{rem}

\begin{rem}\label{rem:data_sink}
  By combining (\ref{eq:compress_y}) and (\ref{eq:X_param}), we find that
  \begin{equation}
    \widetilde{X}_q^{(i+1)T}\widetilde{\mathbf{y}}_q^i(t)=X^{(i+1)T}\mathbf{y}(t),
  \end{equation}
  which means that node $q$ always has access to the filtered signal $X^{(i+1)T}\mathbf{y}=\widehat{\mathbf{y}}^{i+1}$ based on the most recent version of the filter $X^{i+1}$. If any of the other nodes would act as a data sink, the updating node $q$ has to forward the observations of $\widehat{\mathbf{y}}^{i+1}$ to the data sink.
\end{rem}

\subsection{Multiple Signals, Deterministic Terms and Variables}\label{sec:mult_var}
The DASF algorithm can be immediately adapted to the generalized version of (\ref{eq:prob_g}) defined in (\ref{eq:prob_g_full}), i.e., with multiple filters, signals and deterministic terms. In the case of multiple signals (stochastic variables) or deterministic terms appearing in the problem in the forms $X^T\mathbf{y}$ and $X^TB$ respectively, every single object is treated as previously presented, creating new data to be communicated between the nodes for every additional expression. Examples include the GEVD and TRO given in Table \ref{tab:ex_prob} having two signals $\mathbf{y}$ and $\mathbf{v}$. An example of a problem with two deterministic terms is given in Section \ref{sec:ex_sim}.

Similarly, we could also consider cases with multiple optimization variables. Taking the example of CCA given in Table \ref{tab:ex_prob}, the two optimization variables $(X,W)$ appear as $X^T\mathbf{y}$ and $W^T\mathbf{v}$ in the problem. Then, nodes $k\neq q$ compress their signals as $\widehat{\mathbf{y}}_k^i=X_k^{iT}\mathbf{y}_k$ and $\widehat{\mathbf{v}}_k^i=W_k^{iT}\mathbf{v}_k$ and transmit them to node $q$. Then, node $q$ solves its local problem as
\begin{equation}
  \left(\widetilde{X}_q^{i+1},\widetilde{W}_q^{i+1}\right)=\underset{(\widetilde{X}_q,\widetilde{W}_q)\in\widetilde{\mathcal{S}}_q^i}{\text{argmin }}\varphi\left(\widetilde{X}_q^T\widetilde{\mathbf{y}}_q^i(t),\widetilde{W}_q^T\widetilde{\mathbf{v}}_q^i(t)\right).
\end{equation}
Partitioning $\widetilde{W}_q$ as
\begin{equation}
  \widetilde{W}_q=[W_q^T,H_1^T,\dots,H_{q-1}^T,H_{q+1}^T,\dots,H_K^T]^T,
\end{equation}
similarly to (\ref{eq:X_tilde_part}) for $\widetilde{X}_q$, node $q$ sends the $G_k^{i+1}$ and $H_k^{i+1}$'s to corresponding nodes $k$ such that they update their local variables as in (\ref{eq:full_upd}). The same procedure can be applied for more than two variables.

\begin{rem}\label{rem:gamma}
  The communication burden required to transmit the compressed terms $X_k^{iT}B_k$ is minimal because $B_k$'s are deterministic parameters, as opposed to the signals $\mathbf{y}_k$, which require sending batches of multiple compressed observations in each iteration to estimate the signal statistics at the updating node (see Remark \ref{rem:batch}). The communication cost of the deterministic part can be further reduced when we have an expression of the form $(X^TB^{(1)})\cdot (X^TB^{(2)})^T=X^T\Gamma X$, where $\Gamma$ is a block-diagonal deterministic matrix written as
  \begin{equation}\label{eq:Gamma_blkdiag_def}
    \Gamma=\textit{BlkDiag}(\Gamma_1,\dots,\Gamma_K),
  \end{equation}
  where each $\Gamma_k$ is known by node $k$. Each node could then transmit $X_k^{iT}\Gamma_kX_k^i\in\mathbb{R}^{Q\times Q}$ at iteration $i$ instead of $X_k^{iT}B_k^{(1)}$ and $X_k^{iT}B_k^{(2)}$, which is more efficient when $Q<2L$. Although expression (\ref{eq:Gamma_blkdiag_def}) is quite specific, it is encountered often in spatial filtering, for example for orthogonality constraints such as $X^TX=I_Q$ or $\ell_2$-norm regularization terms. For example, consider the PCA constraint $X^TX=I_Q$, where $B^{(1)}=B^{(2)}=I_M$. Then it is sufficient for the nodes to transmit $X_k^{iT}X_k^i$ instead of $X_k^i$.
\end{rem}

\section{Topology-Independent DASF (TI-DASF)}\label{sec:ti_dsf}
Until this point, we have considered fully-connected WSNs only, where every node in the network is a neighbor of every other node. In this section, we extend our discussions and describe the DASF algorithm for other network topologies. For this purpose, we first consider star topologies which are helpful to introduce the main idea, which will then lead to generalizations to tree topologies and finally to any (connected) network topology.

\subsection{Star Topologies}
We keep the same definitions introduced in Sections \ref{sec:prob_state} and \ref{sec:fc_dsf} and consider now that the WSN has a central node $c\in\mathcal{K}$ to which every other node $k\neq c$ is connected (having only node $c$ as a neighbor). In the case where the center node $c$ is the updating node, we have the same setting as the fully-connected case and the steps described in the previous section apply, therefore we present here a strategy when the updating node $q\neq c$. A straightforward approach would be to let node $c$ relay all the data from all other nodes to create a virtually fully-connected network. However, this would put high bandwidth requirements on node $c$, which would not scale well with respect to network size. Instead, we claim that it is sufficient for node $q$ to have access to the signal defined in (\ref{eq:y_Xy}), which is slightly rewritten here as
\begin{equation}\label{eq:sum_Xy}
  \widehat{\mathbf{y}}^i(t)=X^{iT}\mathbf{y}(t)=X_q^{iT}\mathbf{y}_q(t)+\sum_{k\neq q}\widehat{\mathbf{y}}_k^i(t).
\end{equation}
Note that the second term can be computed at node $c$ (which includes node $c$'s own sensor observations $\mathbf{y}_c$, as well as the compressed signals $\widehat{\mathbf{y}}_k^i$ of the nodes $k\neq q$) such that only the sum has to be forwarded instead of the individual terms. The data received by node $q$ from $c$ is then a $Q-$channel signal given by
\begin{equation}
  \widehat{\mathbf{y}}^i_{c\rightarrow q}(t)\triangleq \sum_{k\in\mathcal{K}\backslash\{q\}}\widehat{\mathbf{y}}^i_k(t),
\end{equation}
and a similar expression can be written for the deterministic terms:
\begin{equation}
  \widehat{B}^i_{c\rightarrow q}\triangleq \sum_{k\in\mathcal{K}\backslash\{q\}}\widehat{B}^i_k.
\end{equation}

From the perspective of node $q$, the network consists of only itself and node $c$, so by following analogous steps to Algorithm \ref{alg:fc_dsfo}, node $q$ creates its vector of locally available data:
\begin{equation}\label{eq:star_data}
  \begin{aligned}
    \widetilde{\mathbf{y}}^i_q(t)&=[\mathbf{y}_q^T(t),\widehat{\mathbf{y}}_{c\rightarrow q}^{iT}(t)]^T,\\
    \widetilde{B}^i_q&=[B_q^T,\widehat{B}_{c\rightarrow q}^{iT}]^T,
  \end{aligned}
\end{equation}
such that the corresponding $\widetilde{X}_q^{i+1}$ is obtained at iteration $i$ by solving the local problem (\ref{eq:loc_prob_g}) using the data that is available at node $q$, as in (\ref{eq:star_data}). Similar to (\ref{eq:X_tilde_i1}), we define the partitioning:
\begin{equation}
  \widetilde{X}^{i+1}_q=[X_q^{(i+1)T},G_{c}^{(i+1)T}]^T\in\mathbb{R}^{(M_q+Q)\times Q},
\end{equation}
where $G^{i+1}_{c}$ is analogous to the $G_k^{i+1}$'s in the previous section. Since node $q$ has only one link, there is only one such matrix resulting from solving the local problem (\ref{eq:loc_prob_g}), which is sent to node $c$. Finally, the central node $c$ disseminates this matrix $G_{c}^{i+1}$ to the other nodes to update their local compressor as in (\ref{eq:full_upd}), but with $G_k=G_c$ for all $k$.

\subsection{Tree Topologies}
We now consider a network represented by a tree, i.e., a graph without cyclic paths. A leaf node is defined as a node with a single neighbor, i.e., a node $k$ for which $|\mathcal{N}_k|=1$. Recalling that our objective is to be able to recreate (\ref{eq:sum_Xy}) at the updating node $q$, we perform an in-network fusion across the different tree branches that are rooted in node $q$. This fusion can be done in a bottom-up fashion without central coordination. Indeed, the strategy for each node $k\neq q$ is to wait until it has received the compressed signals from all its neighbors except one (denoted as node $n$), sum these and add its own compressed signal $\widehat{\mathbf{y}}_k^i$, and transmit to its remaining neighbor $n$. Formally, the compressed signal being sent from node $k\neq q$ to $n$ at iteration $i$ is
\begin{equation}\label{eq:sum_fwd}
  \widehat{\mathbf{y}}_{k \rightarrow n}^i(t)=X_k^{iT}\mathbf{y}_k(t)+\sum_{l\in\mathcal{N}_k\backslash\{n\}}\widehat{\mathbf{y}}_{l\rightarrow k}^i(t).
\end{equation}
Note that this is a recursive definition, which is bootstrapped by the leaf nodes, for which the second (recursive) term vanishes. This data fusion flow is illustrated in Figure \ref{fig:tree_diagram} for an example network. The fused signals will eventually arrive at the updating node $q$ which receives
\begin{equation}\label{eq:sum_fwd_n}
  \widehat{\mathbf{y}}_{n\rightarrow q}^i(t)=X_n^{iT}\mathbf{y}_n(t)+\sum_{k\in\mathcal{N}_n\backslash\{q\}}\widehat{\mathbf{y}}_{k\rightarrow n}^i(t)=\sum_{k\in\mathcal{B}_{nq}}\widehat{\mathbf{y}}_k^i(t),
\end{equation}
from each of its neighbors $n\in\mathcal{N}_q$, where we define $\mathcal{B}_{nq}$ to be the connected subgraph containing node $n$ when the link between nodes $n$ and $q$ is removed (see Figure \ref{fig:tree_diagram}). The same process is applied to the deterministic terms such that node $q$ receives
\begin{equation}\label{eq:sum_fwd_B}
  \widehat{B}_{n\rightarrow q}^i=X_n^{iT}B_n+\sum_{k\in\mathcal{N}_n\backslash\{q\}}\widehat{B}_{k\rightarrow n}^i=\sum_{k\in\mathcal{B}_{nq}}\widehat{B}_k^i
\end{equation}
from all its neighbors $n\in\mathcal{N}_q$.

\begin{figure}[t]
  \includegraphics[width=0.48\textwidth]{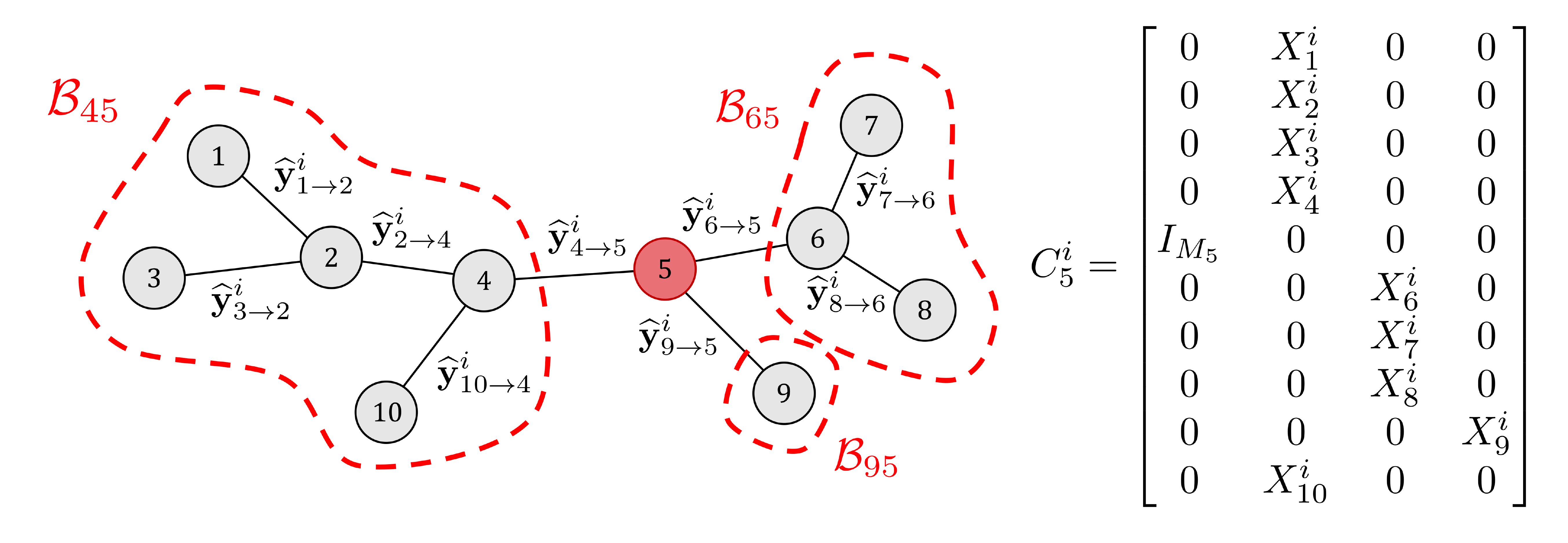}
  \caption{\cite{musluoglu2021distributed} Example of a tree network where the updating node is node $5$. Each neighbor of node $5$ creates its own cluster containing the nodes ``hidden'' from node $5$ behind them, shown here as $\mathcal{B}_{45}$, $\mathcal{B}_{65}$, $\mathcal{B}_{95}$. The resulting transition matrix is given by $C_5^i$.}
  \label{fig:tree_diagram}
\end{figure}

Writing $\mathcal{N}_q=\{n_1,\dots,n_{|\mathcal{N}_q|}\}$, we have the vector of available data at the updating node $q$:
\begin{equation}\label{eq:tree_data}
  \begin{aligned}
    \widetilde{\mathbf{y}}_q^i(t)&=[\mathbf{y}_q^T(t),\widehat{\mathbf{y}}_{n_1\rightarrow q}^{iT}(t),\dots,\widehat{\mathbf{y}}_{n_{|\mathcal{N}_q|}\rightarrow q}^{iT}(t)]^T,\\
    \widetilde{B}_q^i&=[B_q^T,\widehat{B}_{n_1\rightarrow q}^{iT},\dots,\widehat{B}_{n_{|\mathcal{N}_q|}\rightarrow q}^{iT}]^T.
  \end{aligned}
\end{equation}
We note that the relationship between the network-wide data $\mathbf{y}$, $B$ and the locally available $\widetilde{\mathbf{y}}_q^i$, $\widetilde{B}_q^i$ can again be described by means of a compression matrix $C_q^i$ as in (\ref{eq:compress_y})-(\ref{eq:compress_B}), such that $\widetilde{\mathbf{y}}_q^i=C_q^{iT}\mathbf{y}$ and $\widetilde{B}_q^i=C_q^{iT}B$. An example of such a matrix $C_q^i$ is shown in Figure \ref{fig:tree_diagram}. We recommend the reader to use the example of this figure to appreciate the structure of this matrix, yet we also provide a general definition for completeness. In general, this matrix can be defined as
\begin{equation}\label{eq:cqi_tree}
  C_q^i=\left[
    \begin{array}{c|c}
    0 &  \\
    I_{M_q} & \Theta_{-q}^i \\
    0 & 
    \end{array}
    \right],
\end{equation}
where $I_{M_q}$ is placed in the $q-$th block-row, and $\Theta_{-q}^i$ is a block matrix with $K$ block-rows and $|\mathcal{N}_q|$ block-columns, where the block at the $k-$th block-row and $m-$th block-column is represented by $\Theta_{-q}^i(k,m)\in\mathbb{R}^{M_k\times Q}$. Each block-column corresponds to one of the neighbors $n\in\mathcal{N}_q$ of $q$, which we re-index as $m_n\in\{1,\dots,|\mathcal{N}_q|\}$. Then, we have
\begin{equation}\label{eq:Theta_qi_def}
  \Theta_{-q}^i(k,m_n)=\begin{cases}
    X_k^i & \text{if $k\in\mathcal{B}_{nq}$} \\
    0 & \text{otherwise}
    \end{cases}.
\end{equation}
As in the previous cases, the transition from the local variable to the network-wide one is given by $X=C_q^i\widetilde{X}_q$ at node $q$ and iteration $i$. We can verify that $\widetilde{X}_q$ is a feasible point of the local problem if and only if $C_q^i\widetilde{X}_q$ is a feasible point of the global problem, i.e., (\ref{eq:loc_glob}) and more generally Lemma \ref{lem:X_in_constraints} also holds here. Node $q$ then solves its local problem (\ref{eq:loc_prob_g}) using the locally available data described in (\ref{eq:tree_data}), to obtain $\widetilde{X}_q^{i+1}$, partitioned as
\begin{equation}\label{eq:X_tilde_tree}
  \widetilde{X}_q^{i+1}=\Big[X_q^{(i+1)T},G_{n_1}^{(i+1)T},\dots,G_{n_{|\mathcal{N}_q|}}^{(i+1)T}\Big]^T.
\end{equation}
Each $G_{n}^{i+1}$ is then disseminated into the corresponding subgraph $\mathcal{B}_{nq}$ through node $n$ (and the nodes behind it in $\mathcal{B}_{nq}$) and every node updates its compressor as
\begin{equation}\label{eq:upd_tree}
  X_k^{i+1}=\begin{cases}
  X_q^{i+1} & \text{if $k=q$} \\
  X_k^{i}G_n^{i+1} & \text{if $k\in\mathcal{B}_{nq}$, $n\in\mathcal{N}_q$}
  \end{cases}
\end{equation}
such that we again have $X^{(i+1)T}\mathbf{y}=\widetilde{X}_q^{(i+1)T}\widetilde{\mathbf{y}}_q^i$ and $X^{(i+1)T}B=\widetilde{X}_q^{(i+1)T}\widetilde{B}_q^i$.

From (\ref{eq:upd_tree}), we observe that we have parameterized the global variable $X$ at node $q$ and iteration $i$ as
\begin{equation}\label{eq:X_param_tree}
  X=C_q^i\widetilde{X}_q=\begin{bmatrix} X_1^i\boxed{G_{n(1)}}\\
    \vdots\\
    X_{q-1}^i\boxed{G_{n(q-1)}}\\
    \boxed{X_q}\\
    X^i_{q+1}\boxed{G_{n(q+1)}}\\
    \vdots\\
    X^i_{K}\boxed{G_{n(K)}}
  \end{bmatrix},
\end{equation}
where $C_q^i$ is given in (\ref{eq:cqi_tree}) and $n(k)$ is the neighbor $n\in\mathcal{N}_q$ of node $q$ such that $k\in\mathcal{B}_{nq}$. This can be compared with (\ref{eq:X_param}) for the fully-connected case. Since the optimization variable of (\ref{eq:loc_prob_g}) is partitioned as in (\ref{eq:X_tilde_tree}), we can see from (\ref{eq:X_param_tree}) that, similarly to the fully-connected case, the updating node $q$ can ``freely'' update its filter $X_q$, while the filters of the other nodes can only change up to a right-hand side matrix multiplication with a $G_n-$matrix (the updating variables are the framed variables in (\ref{eq:X_param_tree})). In contrast to the fully-connected case in (\ref{eq:X_param}), some of the $G_n$'s are constrained to be equal due to the network topology since $k,l\in\mathcal{B}_{nq}\Rightarrow G_{n(k)}=G_{n(l)}$. This implies that the degrees of freedom in the updating steps of the tree-based DASF algorithm are determined by the number of neighbors of the updating node.

\begin{rem}
  Sometimes it is possible that a node cannot generate $Q$ linearly independent output channels using (\ref{eq:sum_fwd}). For example, this happens if node $k$ is a leaf node and $M_k<Q$. In this case, node $k$ will send its raw uncompressed sensor data $\mathbf{y}_k$ instead of $\widehat{\mathbf{y}}_{k\rightarrow n}^i$ defined in (\ref{eq:sum_fwd}). The neighbor $n$ that receives the raw data from node $k$ will then treat this data as part of its own sensor signals, i.e., $\mathbf{y}_n$ is stacked with $\mathbf{y}_k$ and its sensor channel count becomes $M_n+M_k$. In other words, the data flow starting at the leaf nodes can initially consist of raw sensor channels until a node has more than $Q$ channels to compress them into a $Q-$channel signal. Therefore, the number of transmitted channels is at most $Q$ per node, but can also be less than $Q$. An analogous statement applies to the deterministic terms $B_k$.
\end{rem}

\begin{figure}[!t]
  \removelatexerror
  \begin{algorithm}[H]
    \DontPrintSemicolon
  \caption{Topology-Independent Distributed Adaptive Signal Fusion (TI-DASF) Algorithm\;
  Code available in \cite{musluoglu2022dsfotoolbox}}\label{alg:ti_dsfo}
  \SetKwInOut{Output}{output}
  \Output{$X^*$}
  \BlankLine
  Initialize $X^0$, $i\gets0$.\;
  \Repeat
  {
  Choose the updating node as $q\gets (i\mod K)+1$.\;
  1) The network $\mathcal{G}$ is pruned into a tree $\mathcal{T}^i(\mathcal{G},q)$.\;

  2) Every node $k$ collects a new batch of $N$ samples of $\mathbf{y}_k$. All nodes compress these to $N$ samples of $\widehat{\mathbf{y}}^{i}_k$ as in (\ref{eq:y_hat}). $\widehat{B}_k^i$ is computed using (\ref{eq:B_hat}).\;
  3) The nodes sum-and-forward their compressed data towards node $q$ via the recursive rule (\ref{eq:sum_fwd}) (and a similar rule for the $\widehat{B}_k^i$'s). Node $q$ eventually receives $N$ samples of $\widehat{\mathbf{y}}^{i}_{n\rightarrow q}$ along with $\widehat{B}_{n\rightarrow q}^i$ given in (\ref{eq:sum_fwd_n})-(\ref{eq:sum_fwd_B}) from all its neighbors $n\in\mathcal{N}_q$.\; 

  \At{Node q}
  {
    4a) Compute $\widetilde{X}_q^{i+1}$ as the solution of (\ref{eq:loc_prob_g}) where $\widetilde{\mathbf{y}}_q^i$, $\widetilde{B}_q^i$ and $\widetilde{X}_q$ are redefined as in (\ref{eq:tree_data}) and (\ref{eq:X_tilde_tree}). If the solution of (\ref{eq:loc_prob_g}) is not unique, select the solution which minimizes $||\widetilde{X}_q^{i+1}-\widetilde{X}_q^i||_F$ with $\widetilde{X}_q^i$ defined as in (\ref{eq:X_fixed}).\;

    4b) Partition $\widetilde{X}^{i+1}_q$ as in (\ref{eq:X_tilde_tree}).\;
    4c) Disseminate $G_n^{i+1}$ to all nodes in $\mathcal{B}_{nq}$, $\forall n\in\mathcal{N}_q$.\;
  }
  
  5) Every node updates $X_k^{i+1}$ according to (\ref{eq:upd_tree}).\;
  
  $i\gets i+1$\;
  }
  \end{algorithm}
  \small{\textbf{Note:} Each iteration uses a different batch of $N$ samples in step 2, i.e., the iterations can be spread over different time segments in order to avoid retransmitting the same batch of $N$ samples across the network (see also Remark \ref{rem:batch}). This makes that the sample time index $t$ is coupled to the iteration index $i$.\\
  \textbf{Note:} As in FC-DASF, the fused output signal $X^T\mathbf{y}(t)=\widehat{\mathbf{y}}(t)$ can be computed as $\widetilde{X}_q^{(i+1)T}\widetilde{\mathbf{y}}_q^i(t)$ at node $q$ without extra transmissions (see also Remark \ref{rem:data_sink}).}
  \hrule
\end{figure}

\subsection{General Connected Graphs}
Suppose the network is represented by a connected graph $\mathcal{G}$, which can potentially contain cycles. The main difference with the previous subsection is that there is more than one choice to forward the compressed data to the updating node $q$. We therefore propose to prune the graph $\mathcal{G}$ into a (different) tree at each iteration $i$, so that we can apply the same steps as the ones described for the tree topology case. The resulting tree is denoted as $\mathcal{T}^i(\mathcal{G},q)$ to highlight the fact that the pruning function $\mathcal{T}^i$ depends on the current updating node $q$. An example of the data flow in pruned networks for two different updating nodes is given in Figure \ref{fig:data_flow}.

The choice of the mapping function $\mathcal{T}^i$ is a free design choice as long as the resulting tree remains connected. However, the convergence results in \cite{musluoglu2022unifiedp2} that define the bound (\ref{eq:upper_bound}) also assume that the pruning function does not remove the links between the updating node $q$ and its neighbors. Indeed, if (\ref{eq:upper_bound}) is satisfied, then this rule ensures that $J\leq (1+|\mathcal{N}_q|)Q^2$ at any updating node $q$, which is one of the convergence conditions for the DASF algorithm in \cite{musluoglu2022unifiedp2}. Furthermore, disconnecting node $q$ from a neighbor would also reduce the convergence speed, since it would lead to a smaller number of resulting $G_n$'s in (\ref{eq:X_tilde_tree})-(\ref{eq:upd_tree}) and therefore a lower number of degrees of freedom in the local optimization problem (\ref{eq:loc_prob_g}), typically leading to a smaller descent of the cost $\varphi$ or $f$. In comparison, in the case of fully-connected networks, we were able to use $(K-1)$ of such $G_n$'s in each iteration, which --- as we will show in Section \ref{sec:ex_sim} --- leads to the fastest convergence.

A simple distributed protocol to establish $\mathcal{T}^i$, i.e., to set up a tree that satisfies the aforementioned rule, is to let the updating node $q$ broadcast a token to each neighbor. Once a node receives a token, it acknowledges a link to the node from which it received it, and it then broadcasts that token to its remaining neighbors from which it did not receive a token. This process continues until all nodes have received a token. If a node receives multiple tokens, it only connects to the parent node from which it received the token first (in case of a tie, it makes a random choice). 

\begin{figure}[t]
  \includegraphics[width=0.48\textwidth]{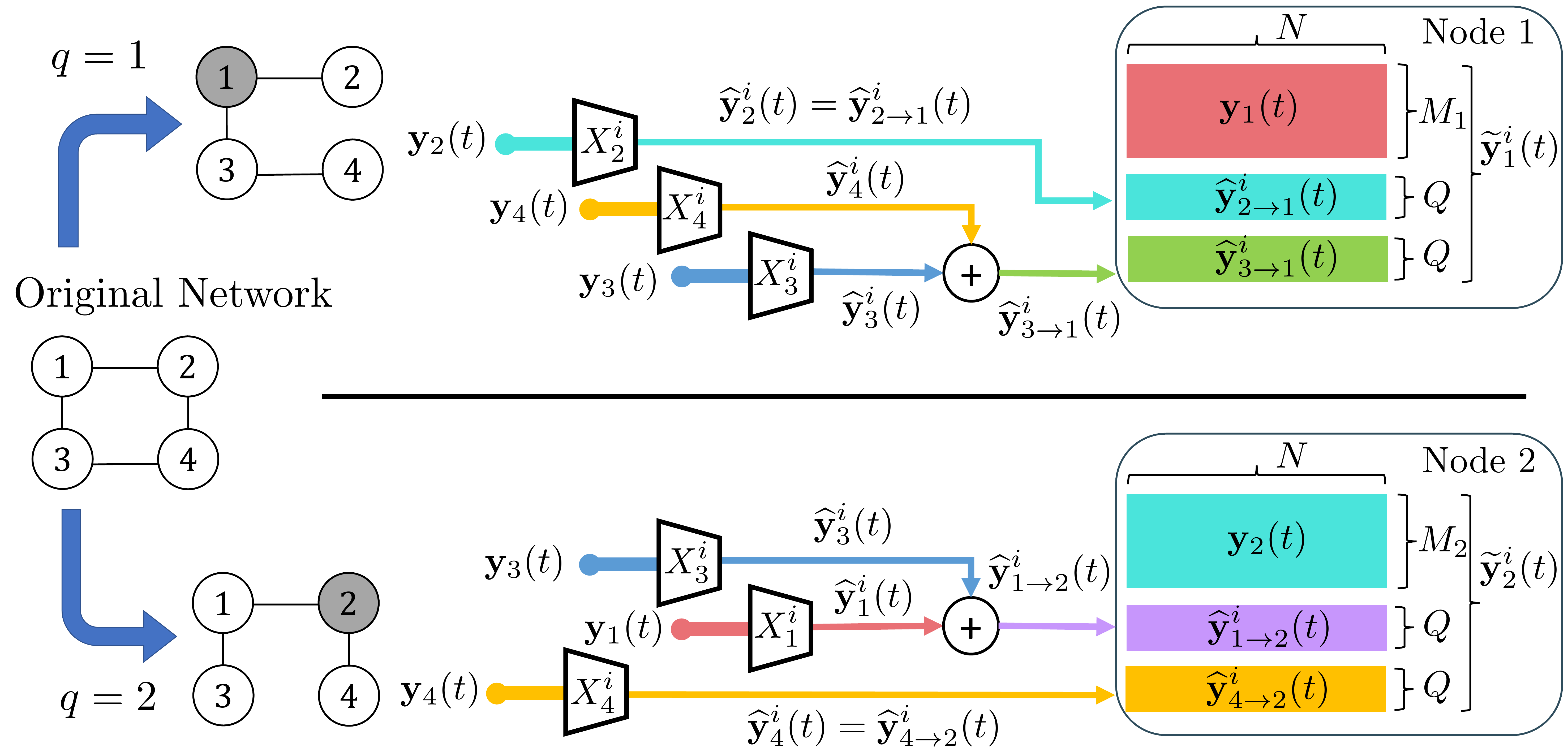}
  \caption{Data flow for a $4-$node network when nodes $q=1$ (top) and $q=2$ (bottom) are the updating node. The updating node has always access to its own data $\mathbf{y}_q$, while receiving fused signals from every other node in the network through its neighboring nodes.}
  \label{fig:data_flow}
\end{figure}

The full TI-DASF algorithm is given in Algorithm \ref{alg:ti_dsfo} and convergence analyses are provided in \cite{musluoglu2022unifiedp2}. We note that FC-DASF (Algorithm \ref{alg:fc_dsfo}) is a special case of TI-DASF (Algorithm \ref{alg:ti_dsfo}). The same generalizations as explained in Section \ref{sec:mult_var} apply for TI-DASF as well.

\begin{rem}
  In the TI-DASF algorithm, each node $k$ transmits only $NQ$ samples of $\mathbf{y}_k$ per iteration, which is independent of the number of neighbors or the total number of nodes in the network. As opposed to the fully-connected case, the TI-DASF algorithm can reduce the total communication burden, even when $Q\geq M_k$. This is because naively relaying all the raw sensor data to a fusion center node for centralized processing would require most nodes to send more than $NQ$ samples as they would also have to forward the data from their neighbors, and their neighbors' neighbors, etc (see Figure \ref{fig:data_flow_tree}). Obviously, such a relaying approach does not scale well with the network size, whereas the per-node bandwidth requirements in the TI-DASF algorithm are independent of the number of nodes.
\end{rem}

\section{Examples and Simulations}\label{sec:ex_sim}
In this Section, we present examples of problems that fit the DASF framework and demonstrate the performance of the algorithm in various settings to gain insights into the convergence behavior as a function of $Q$, $K$ and the topology. The examples also serve as illustrations to familiarize the reader with how to recognize SFO problems of the form (\ref{eq:prob_g}) or (\ref{eq:prob_g_full}), and how to translate these into the DASF framework. It is noted that several existing distributed algorithms can be shown to be special cases of the proposed unified DASF framework (Table \ref{tab:ex_prob}). Since these special cases have been validated already, and to show the generalizing properties of the framework, we will validate it on a few new problems that fit our framework. For this purpose, we have also published a companion toolbox \cite{musluoglu2022dsfotoolbox} which allows to automatically generate and simulate a distributed algorithm for any arbitrary problem of the form (\ref{eq:prob_g}) or (\ref{eq:prob_g_full}). The only requirement is that the user provides a solver for the centralized problem (\ref{eq:prob_g}) or (\ref{eq:prob_g_full}). The software will use this solver to compute the updating step in each iteration of the FC- or TI-DASF algorithm, as the update requires solving a compressed local instance of (\ref{eq:prob_g}) or (\ref{eq:prob_g_full}).

\begin{figure}[t]
  \includegraphics[width=0.48\textwidth]{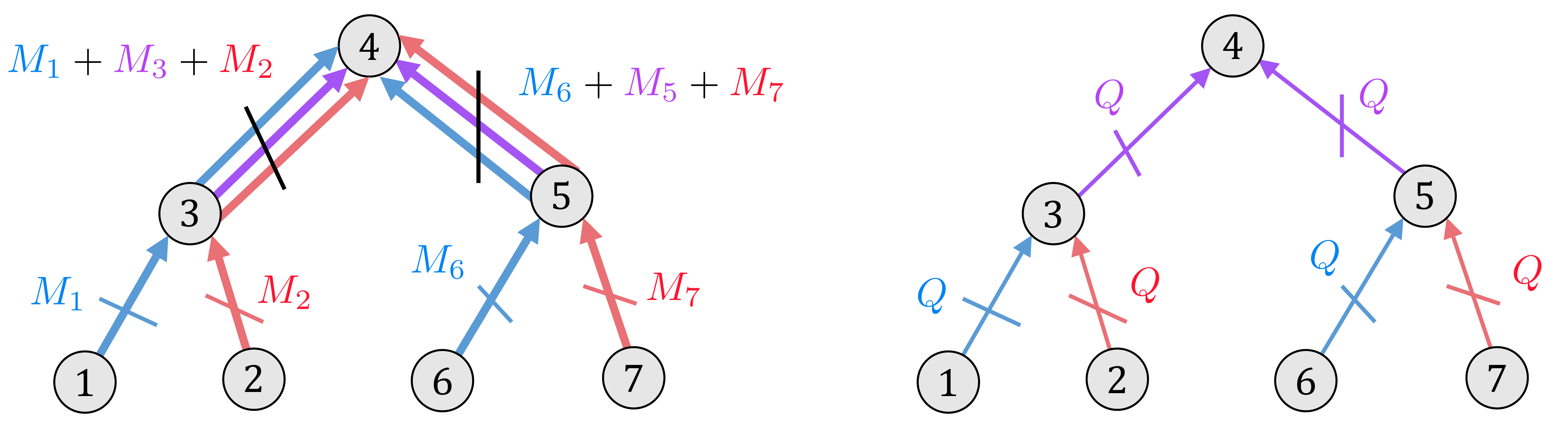}
  \caption{Comparison between a straightforward relaying approach (left) and the scalable fuse-and-forward approach the DASF algorithm uses (right). In this example, node $4$ is the updating node.}
  \label{fig:data_flow_tree}
\end{figure}

In our experiments, we refer to \textit{randomly generated trees} as trees where each node has between $0$ and $4$ children with $1.7$ children on average\footnote{The number of children nodes is selected randomly from $[0,1,2,3,4]$, which is distributed following the probability vector $[0.2,0.3,0.2,0.2,0.1]$.}. We consider two different sensor signals $\mathbf{y}$ and $\mathbf{v}$ measured at each node throughout this section, following the mixture model given by
\begin{align}
  \mathbf{y}(t)&=\Pi_s\cdot \mathbf{s}(t)+\mathbf{n}(t),\label{eq:signal_model_v}\\
  \mathbf{v}(t)&=\Pi_r\cdot \mathbf{r}(t)+\mathbf{y}(t)\nonumber \\
  &=\Pi_r\cdot \mathbf{r}(t)+\Pi_s\cdot \mathbf{s}(t)+\mathbf{n}(t),
\end{align}
with $\mathbf{r}(t)$, $\mathbf{s}(t)\overset{i.i.d.}{\sim}\mathcal{N}(0,\sigma_r^2)$, $\mathbf{n}(t)\overset{i.i.d.}{\sim}\mathcal{N}(0,\sigma_n^2)$ for every entry and time instance $t$. In the experimental settings of Sections \ref{sec:qcqp} to \ref{sec:scqp}, the entries of $\Pi_s$ and $\Pi_r$ are independent of the time $t$ and drawn from the uniform distribution within the interval $[-0.5,0.5]$. In Section \ref{sec:adaptivity}, we will consider an adaptive setting where $\Pi_s$ is time-dependent. We assume that both $\mathbf{y}$ and $\mathbf{v}$ are observable at the nodes (this is possible, e.g., if the source $\mathbf{r}$ has an on-off behavior).
In all the simulations, we take the number of samples of the signals to be communicated between the nodes to be $N=10^4$ and each node has an equal number of channels $M_k=M/K$ (where the total number of channels $M$ and the total number of nodes $K$ will vary). The convergence of the DASF algorithm is assessed by tracking the normalized error $\epsilon$:
\begin{equation}\label{eq:epsilon_error}
  \epsilon(X^i)=\frac{||X^i-X^*||_F^2}{||X^*||_F^2}.
\end{equation}
The optimal value $X^*$ is computed by solving the problems we present in the following paragraphs using centralized solvers. In case the centralized problem has multiple possible solutions, i.e., $|\mathcal{X}^*|>1$, we select $X^*\in\mathcal{X}^*$ in (\ref{eq:epsilon_error}) that best matches $X^i$ in the final iteration of the simulation. This resolves the ambiguity of the solution, while the plots would still reveal non-convergence in case the algorithm would arrive in a limit cycle that switches between multiple accumulation points, i.e., we would observe subsequences of $(X^i)_i$ converging to different solutions of the problem. The parameters chosen for each following problem in Sections \ref{sec:qcqp} to \ref{sec:scqp} are summarized in Table \ref{tab:table_param}. In these experiments, we aim to observe the theoretical convergence result of the DASF algorithm and therefore consider stationary and ergodic signals. On the other hand, Section \ref{sec:adaptivity} has a slightly different experimental setting as we aim to demonstrate the adaptive properties of the DASF algorithm, in which case stationarity does not hold.

\begin{table}[!t]
  \renewcommand{\arraystretch}{1.5}
  \caption{Summary of parameters used in the simulations.}
  \label{tab:table_param}
  \centering
  \begin{tabularx}{0.48\textwidth}{ >{\centering\arraybackslash}c | >{\centering\arraybackslash}X | >{\centering\arraybackslash}X | >{\centering\arraybackslash}X }
  \hline
  Experiment&  Section \ref{sec:qcqp}  &  Section \ref{sec:tro} & Section \ref{sec:scqp}  \\ \hhline{=|=|=|=}
  $Q$ &  $3$   &  $5$ & $\{1,3,5,7\}$ \\ \hline
  $K$ &  $\{10,25,50\}$   &   \multicolumn{2}{c}{$30$} \\ \hline
  $M$, $M_k$ & \multicolumn{3}{c}{$M=450$, $M_k=M/K$, $\forall k\in\mathcal{K}$} \\ \hline
  Signal Statistics & $\sigma_r^2=\sigma_n^2=1$   &   $\sigma_r^2=0.5$, $\sigma_n^2=0.1$ & $\sigma_r^2=\sigma_n^2=1$ \\ \hline
  $N$ & \multicolumn{3}{c}{$10000$} \\ \hline
  Monte Carlo Runs & \multicolumn{3}{c}{$100$} \\ \hline
  \end{tabularx}
\end{table}

\subsection{Quadratically Constrained Quadratic Problem}\label{sec:qcqp}
In this subsection, we will solve the following problem:
\begin{equation}\label{eq:qcqp}
  \begin{aligned}
    \underset{X}{\text{minimize } } \quad & \frac{1}{2}\mathbb{E}[||X^T\mathbf{y}(t)||^2]-\text{tr}(X^TA)\\
    \textrm{subject to} \quad & \text{tr}(X^T X)\leq \alpha^2,\;X^T\mathbf{c}=\mathbf{d},\\
    \end{aligned}
\end{equation}
where we take $\sigma_n^2=\sigma_r^2=1$ for the noise and signal variance. In this problem, we have three deterministic inner products of the form $X^TB$ where $B$ is known a priori. Two of them are the terms $X^TA$ and $X^T\mathbf{c}$ where $A\in\mathbb{R}^{M\times Q}$ and $\mathbf{c}\in\mathbb{R}^{M}$. The third one comes from $X^TX$, which can be written as $(X^TI_M)\cdot(X^TI_M)^T$ which reveals the term $X^TB$ with $B=I_M$ (see also Remark \ref{rem:gamma}). The values of $\alpha\in\mathbb{R}$ and $\mathbf{d}\in\mathbb{R}^{Q}$ have been chosen randomly while ensuring that $\alpha^2\geq||\mathbf{d}||^2/||\mathbf{c}||^2$ which would otherwise make the problem infeasible. We can write $\mathbb{E}[||X^T\mathbf{y}(t)||^2]=\text{tr}(X^TR_{\mathbf{yy}}X)$ where $R_{\mathbf{yy}}=\mathbb{E}[\mathbf{y}(t)\mathbf{y}^T(t)]$ is the covariance matrix of the signal $\mathbf{y}$. We note that the matrix $R_{\mathbf{yy}}$ is assumed to be unknown, as the signal statistics are to be learned by the algorithm, so it should not be seen as a deterministic term. Then, the local problem at iteration $i$ that node $q$ needs to solve is
\begin{equation}\label{eq:qcqp_loc}
  \begin{aligned}
    \underset{\widetilde{X}_q}{\text{minimize } } \quad & \frac{1}{2}\text{tr}(\widetilde{X}_q^TR^i_{\widetilde{\mathbf{y}}_q\widetilde{\mathbf{y}}_q}\widetilde{X}_q)-\text{tr}(\widetilde{X}_q^T\widetilde{A}_q^i)\\
    \textrm{subject to} \quad & \text{tr}(\widetilde{X}_q^TC_q^{iT}C_q^i\widetilde{X}_q)\leq \alpha^2,\;\widetilde{X}_q^T\widetilde{\mathbf{c}}_q^i=\mathbf{d},
    \end{aligned}
\end{equation}
where $R^i_{\widetilde{\mathbf{y}}_q\widetilde{\mathbf{y}}_q}=\mathbb{E}[\widetilde{\mathbf{y}}_q^i(t)\widetilde{\mathbf{y}}_q^{iT}(t)]$ is the correlation matrix of the locally available signal $\widetilde{\mathbf{y}}_q^i$ at node $q$ and iteration $i$. Note that the compression matrix $C_q^i$ appears in the quadratic constraints, which is indeed what one obtains when computing $\widetilde{B}_q^i$ with $B=I_M$, which directly follows from (\ref{eq:compress_B}).

\begin{figure}[t]
  \includegraphics[width=0.48\textwidth]{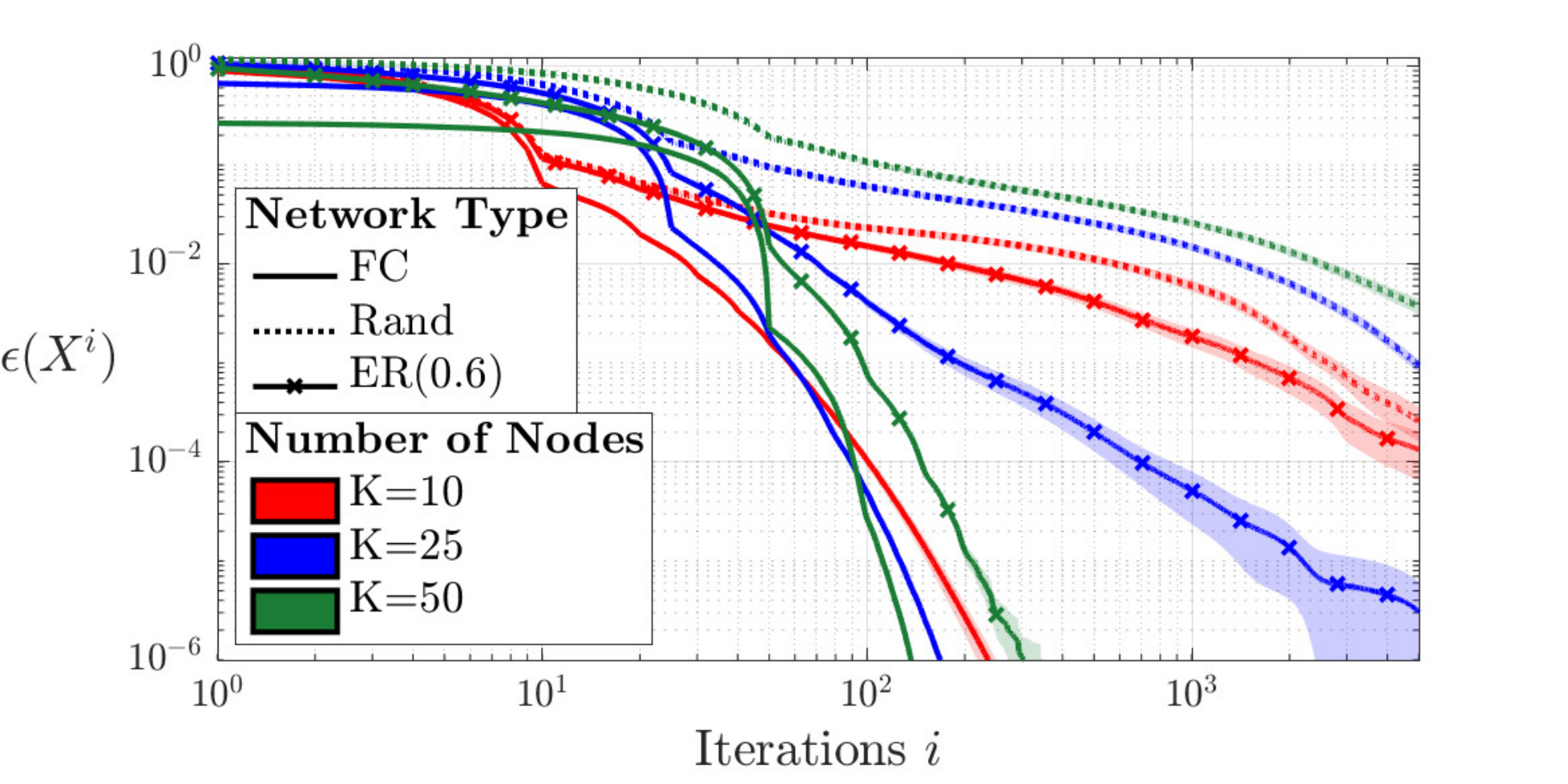}
  \caption{Convergence comparison of the DASF algorithm solving (\ref{eq:qcqp}) in fully-connected networks (\textit{FC}), randomly generated trees (\textit{Rand}) and Erd\H{o}s-R\'{e}nyi random graphs with connection probability of $0.6$ (\textit{ER(0.6)}) for various network sizes. The bold lines represent the mean values across $100$ Monte Carlo runs, while the shaded areas delimit the standard error of the mean around them.}
  \label{fig:K_plot}
\end{figure}

For this experiment, we take the number of channels to be $M=200$, take $Q=3$ and look at the behavior of the algorithm for a varying number of nodes in the network with $K\in\{10,25,50\}$ (the number of channels per node $M_k=M/K$ therefore changes for each $K$). The results are shown in Figure \ref{fig:K_plot}. For the fully-connected case, we see that the smaller networks, i.e., when $K$ is small, converge faster in the first iterations. This is because they are able to do a full round update (over all nodes) in a smaller number of iterations compared to larger networks. However, the larger networks can eventually ``catch up'' and even surpass the convergence speed of their counterparts with smaller $K$ after a certain number of iterations due to the larger number of degrees of freedom (i.e., a larger number of $G_k$ matrices in each iteration). This property is however not observed for randomly generated trees as here the number of $G_k$ matrices at each node is related to its number of neighbors, hence independent of the network size. Overall, the fully-connected topologies lead to faster convergence and the randomly generated trees are the slowest, which is consistent with the expectations based on the amount of degrees of freedom (i.e., the number of $G_k$ matrices in each update). Networks for which the topology is neither a tree nor fully-connected are expected to fall in between these two extreme cases. We add that the high variance observed in Figure \ref{fig:K_plot} over various Monte-Carlo runs is due to the fact that Problem (\ref{eq:qcqp}) has many parameters chosen randomly.

\begin{figure}[t]
  \includegraphics[width=0.48\textwidth]{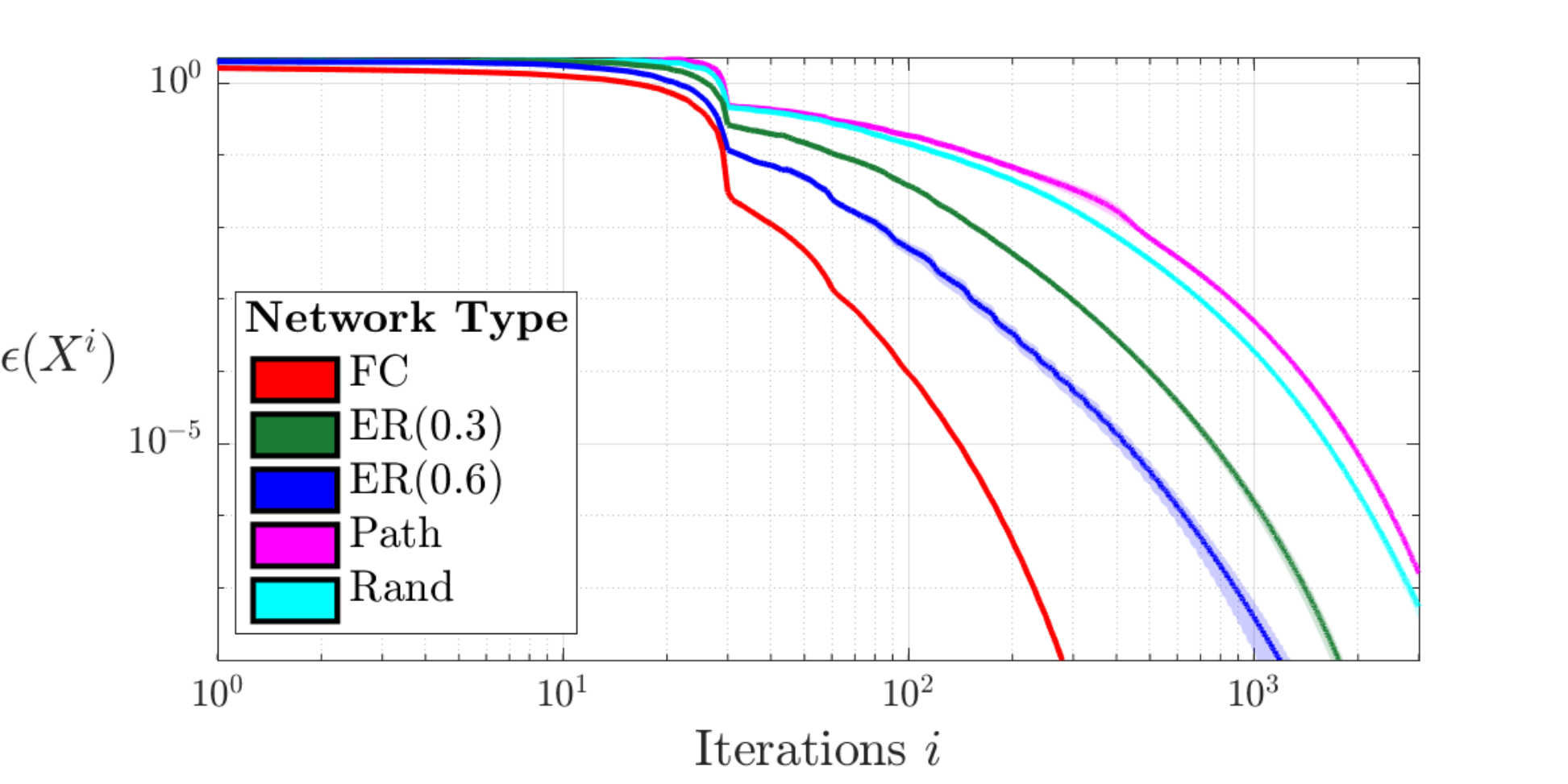}
  \caption{Convergence comparison of the DASF algorithm solving (\ref{eq:tro}) for various network settings namely fully-connected networks (\textit{FC}), Erd\H{o}s-R\'enyi random graphs with connection probability $p$ (\textit{ER($p$)}), path or line graphs (\textit{Path}) and randomly generated trees (\textit{Rand}). The bold lines represent the mean values across $100$ Monte Carlo runs, while the shaded areas delimit the standard error of the mean around them.}
  \label{fig:topo_plot}
\end{figure}

\subsection{The Trace Ratio Optimization Problem}\label{sec:tro}
We now consider the problem:
\begin{equation}\label{eq:tro}
  \begin{aligned}
    \underset{X}{\text{maximize } } \quad & \frac{\mathbb{E}[||X^T\mathbf{v}(t)||^2]}{\mathbb{E}[||X^T\mathbf{y}(t)||^2]}=\frac{\text{tr}(X^TR_{\mathbf{vv}}X)}{\text{tr}(X^TR_{\mathbf{yy}}X)}\\
    \textrm{subject to} \quad & X^T X=I_Q,\\
    \end{aligned}
\end{equation}
often referred to as the trace ratio or trace quotient optimization (TRO) problem \cite{wang2007trace}. It is noted that a distributed algorithm for this TRO problem has been proposed in \cite{musluoglu2021distributed}  where techniques tailored to the TRO problem have been used. Applying the generic Algorithm \ref{alg:ti_dsfo} to Problem (\ref{eq:tro}) results in an alternative distributed algorithm for the TRO problem, based on the DASF framework. $R_{\mathbf{yy}}$ and $R_{\mathbf{vv}}$ are again spatial correlation matrices of $\mathbf{y}$ and $\mathbf{v}$ respectively, assumed to be unknown to the DASF algorithm. We take the signal and noise variance to be $\sigma_r^2=0.5$ and $\sigma_n^2=0.1$ respectively. Problem (\ref{eq:tro}) can be solved using the solver in \cite{wang2007trace} by iteratively computing generalized eigenvalue decompositions.

At updating node $q$ and iteration $i$, (\ref{eq:tro}) is translated to
\begin{equation}\label{eq:dtro}
  \begin{aligned}
    \underset{\widetilde{X}_q}{\text{maximize } } \quad & \frac{\text{tr}(\widetilde{X}_q^TR^i_{\widetilde{\mathbf{v}}_q\widetilde{\mathbf{v}}_q}\widetilde{X}_q)}{\text{tr}(\widetilde{X}_q^TR^i_{\widetilde{\mathbf{y}}_q\widetilde{\mathbf{y}}_q}\widetilde{X}_q)}\\
    \textrm{subject to} \quad & \widetilde{X}_q^TC_q^{iT}C_q^i\widetilde{X}_q=I_Q,\\
    \end{aligned}
\end{equation}
where $R^i_{\widetilde{\mathbf{y}}_q\widetilde{\mathbf{y}}_q}$ and $R^i_{\widetilde{\mathbf{v}}_q\widetilde{\mathbf{v}}_q}$ are estimated in the same way as in the previous example. Therefore, the solver in \cite{wang2007trace} can be applied to solve the local problem (\ref{eq:dtro}) in step 4a of Algorithm \ref{alg:ti_dsfo}, by replacing $R_{\mathbf{yy}}$ and $R_{\mathbf{vv}}$ by $R^i_{\widetilde{\mathbf{y}}_q\widetilde{\mathbf{y}}_q}$ and $R^i_{\widetilde{\mathbf{v}}_q\widetilde{\mathbf{v}}_q}$ respectively, where $q$ is the updating node at iteration $i$.

The experimental results are shown in Figure \ref{fig:topo_plot}, where we highlight the differences in convergence depending on the topology of the network. In particular, we look at fully-connected networks, random trees, line topologies (trees where each node has two neighbors, except leaf nodes which have only one neighbor), and Erd\H{o}s-R\'enyi (ER) models, generated using \cite{perraudin2014gspbox}. In each case, we keep the number of nodes, the number of channels and the compression dimension $Q$ constant, with $K=30$, $M=450$ and $Q=5$ respectively. We observe that the more the network is connected the faster the DASF algorithm converges, the fastest being the fully-connected networks, while the slowest convergence is obtained for line graphs. This is again in line with the results and discussion on the degrees of freedom in Section \ref{sec:qcqp}.

\subsection{Quadratic Problem with a Spherical Constraint}\label{sec:scqp}
Let us now consider:
\begin{equation}\label{eq:scqp}
  \begin{aligned}
    \underset{X}{\text{minimize } } \quad & \frac{1}{2}\mathbb{E}[||X^T\mathbf{y}(t)||^2]+\text{tr}(X^TA)\\
    \textrm{subject to} \quad & \text{tr}(X^T X)=1,\\
    \end{aligned}
\end{equation}
where $\sigma_n^2=\sigma_r^2=1$ and the elements of $A$ have been chosen independently at random. Note that this problem differs from (\ref{eq:qcqp}) in the sense that it has a non-convex constraint set due to the non-linear equality constraint. The local problem at node $q$ and iteration $i$ can be written as
\begin{equation}\label{eq:dscqp}
  \begin{aligned}
    \underset{\widetilde{X}_q}{\text{minimize } } \quad & \frac{1}{2}\text{tr}(\widetilde{X}_q^TR^i_{\widetilde{\mathbf{y}}_q\widetilde{\mathbf{y}}_q}\widetilde{X}_q)+\text{tr}(\widetilde{X}_q^T\widetilde{A}_q^i)\\
    \textrm{subject to} \quad & \text{tr}(\widetilde{X}_q^TC_q^{iT}C_q^i\widetilde{X}_q)=1.\\
    \end{aligned}
\end{equation}
For this experiment, we fix $K=30$, $M=450$ and consider various number of filters $Q\in\{1,3,5,7\}$. We observe in Figure \ref{fig:Q_plot} that the larger the number of filters $Q$, the faster the algorithm converges. This is expected as a larger value for $Q$ implies that more information can be used at every node, however at the expense of a larger communication cost. As for the previous examples, we see that the DASF framework converges faster for fully-connected networks compared to randomly generated trees for Problem (\ref{eq:scqp}).

\begin{figure}[t]
  \includegraphics[width=0.48\textwidth]{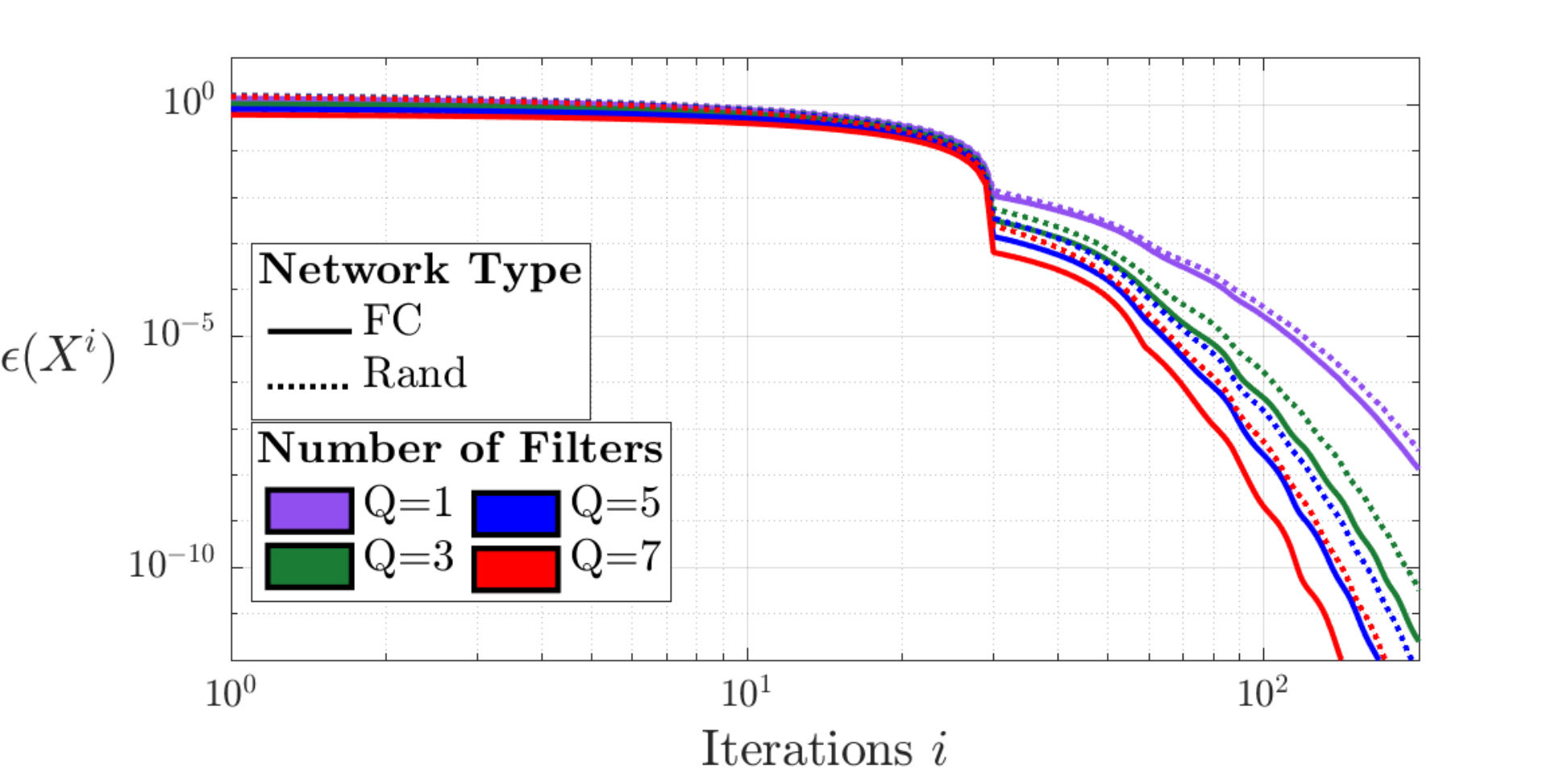}
  \caption{Convergence comparison of the DASF algorithm solving (\ref{eq:scqp}) for various number of filters $Q$ for fully-connected networks (\textit{FC}) and randomly generated trees (\textit{Rand}). The bold lines represent the mean values across $100$ Monte Carlo runs, while the shaded areas delimit the standard error of the mean around them.}
  \label{fig:Q_plot}
\end{figure}

\subsection{DASF in a Tracking Problem}\label{sec:adaptivity}

In this final experiment, we consider the MMSE problem
\begin{equation}\label{eq:prob_ls}
  \underset{\mathbf{x}\in\mathbb{R}^{M}}{\text{minimize } } \quad \mathbb{E}[|s(t)-\mathbf{x}^T\mathbf{y}(t)|^2],
\end{equation}
where $s$ is a one-dimensional signal, implying $Q=1$. The problem is solved using the DASF algorithm on a randomly generated Erd\H{o}s-R\'enyi network with connection probability $0.8$. The network contains $K=10$ nodes, each measuring a signal $\mathbf{y}_k$ with $M_k=4$ channels. At each iteration $i$, $N=10^4$ new times samples of $\mathbf{y}_k$ are used, such that $i=\lfloor t/N \rfloor$. The network-wide signal is given by
\begin{equation}
  \mathbf{y}(t)=\mathbf{p}(t) \cdot s(t)+\mathbf{n}(t),
\end{equation}
for each $t$, with $s(t)\overset{i.i.d.}{\sim}\mathcal{N}(0,1)$ and $\mathbf{n}(t)\overset{i.i.d.}{\sim}\mathcal{N}(0,0.1)$ for every entry and time instance $t$. The main difference with (\ref{eq:signal_model_v}) is that now the steering vector $\mathbf{p}$ (corresponding to the mixture matrix $\Pi_s$ in (\ref{eq:signal_model_v})) changes with time, implying that the statistical properties of $\mathbf{y}$ are time-dependent as well. At each time instance $t$, the solution of (\ref{eq:prob_ls}) is given by $\mathbf{x}^*(t)=(R_{\mathbf{yy}}(t))^{-1}\mathbf{r}_{\mathbf{y}s}(t)$, where $R_{\mathbf{yy}}(t)=\mathbb{E}[\mathbf{y}(t)\mathbf{y}^T(t)]$ and $\mathbf{r}_{\mathbf{y}s}(t)=\mathbb{E}[\mathbf{y}(t)s(t)]$. For each $t$, we have $\mathbf{p}(t)=(1-\lambda(t))\cdot \mathbf{p}_0+\lambda(t)\cdot (\mathbf{p}_0+\Delta)$, where $\mathbf{p}_0$ and $\Delta$ are time-independent vectors of which the entries are drawn from $\mathcal{N}(0,1)$ and $\mathcal{N}(0,0.01)$ respectively. The function $\lambda$ used in this experiment is represented in Figure \ref{fig:adaptivity_plot}, where it can be observed that smooth, as well as abrupt, changes are modeled. At each iteration, the ground-truth solution of (\ref{eq:prob_ls}) is estimated as
\begin{equation}
  \mathbf{x}^{*i}=(R_{\mathbf{yy}}^i)^{-1}\mathbf{r}_{\mathbf{y}s}^i,
\end{equation}
where
\begin{equation}
  R_{\mathbf{yy}}^i=\sum_{\tau=iN}^{iN+N-1}\mathbf{y}(\tau)\mathbf{y}^T(\tau),\; \mathbf{r}_{\mathbf{y}s}^i=\sum_{\tau=iN}^{iN+N-1}\mathbf{y}(\tau)s(\tau).
\end{equation}

\begin{figure}[t]
  \includegraphics[width=0.48\textwidth]{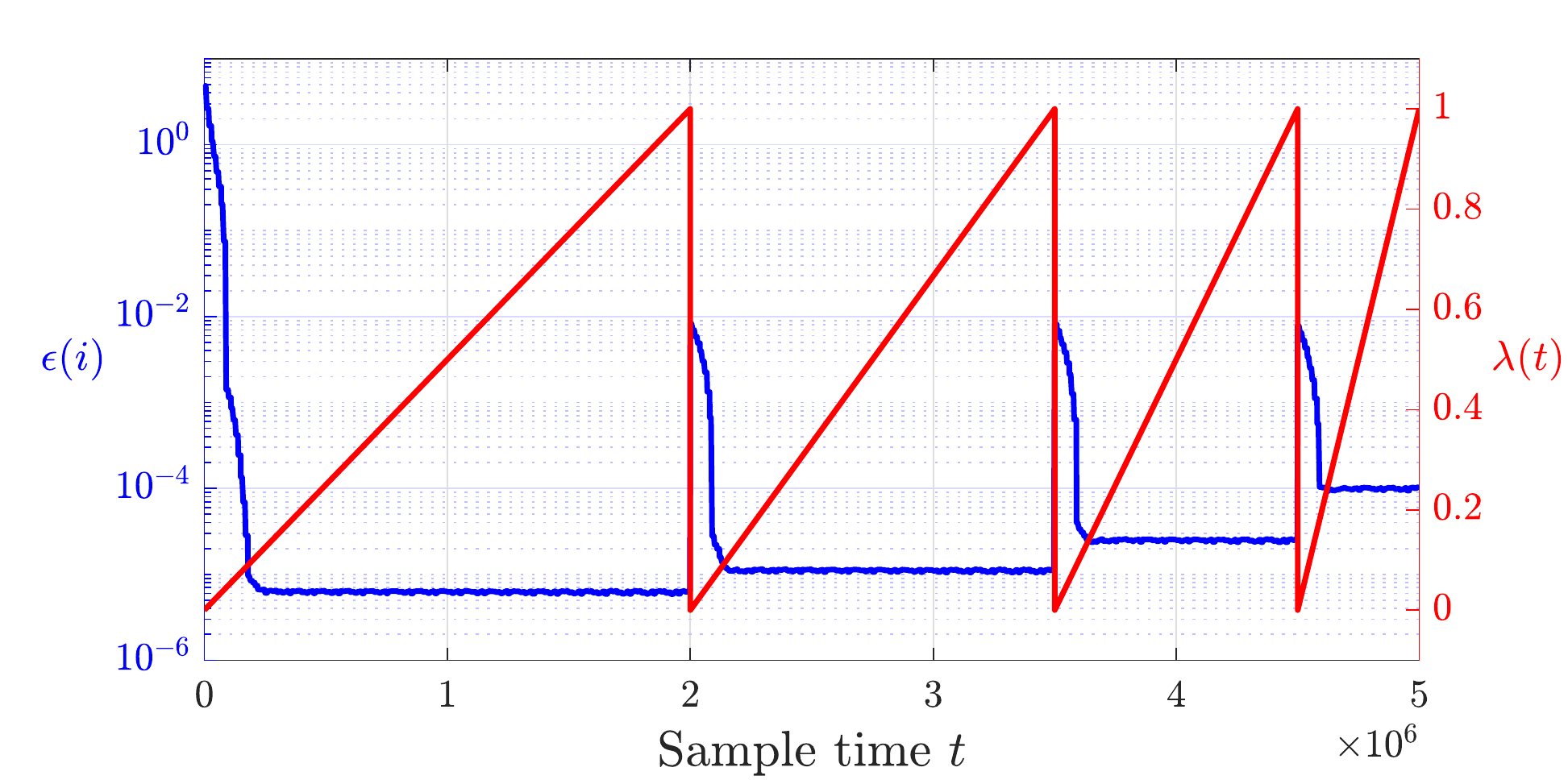}
  \caption{Error $\epsilon$ over time of the DASF algorithm in an adaptive setting (blue). The signal statistics change over time and depend on the function $\lambda$, represented in red. The relationship between the time $t$ and the iterations $i$ is given by $i=\lfloor t/N\rfloor$.}
  \label{fig:adaptivity_plot}
\end{figure}

Figure \ref{fig:adaptivity_plot} shows the median value over $100$ Monte Carlo runs of the error $\epsilon$:
\begin{equation}
  \epsilon(i)=\frac{||\mathbf{x}^i-\mathbf{x}^{*i}||^2}{||\mathbf{x}^{*i}||^2}
\end{equation}
over time, where $i=\lfloor t/N\rfloor$. The algorithm can adapt to abrupt changes in signal statistics, i.e., when the value of $\lambda$ suddenly changes, which translates to an initial jump in the error followed by a decrease. We observe that the DASF algorithm is also able to track slow changes in the statistical properties of the signal, shown by constant error values $\epsilon$ over the iterations, despite changes in the value of $\lambda$. Since the optimal solution $\mathbf{x}^{*i}$ changes at each iteration, the error $\epsilon$ settles around a certain threshold, which is higher for larger rates of change in $\lambda$ and due to the approximation error.

\section{Conclusion}
In this paper, we have proposed the DASF framework which contains a large number of well-known distributed spatial filter design problems and algorithms as a special case. For intelligibility purposes, we have first addressed the case of fully-connected networks (FC-DASF) and then generalized it to any network represented by a connected graph (TI-DASF). In order to reduce the communication burden, the nodes only communicate compressed data across the network. An interesting property of the resulting distributed algorithm is that the local problem to be solved at an updating node has the same structure as the original network-wide centralized problem, such that the same solver can be used. The convergence properties of the algorithm have been illustrated by means of four example instances of the (D)SFO problem (\ref{eq:prob_g}) or its more general form in (\ref{eq:prob_g_full}), one of the examples focusing on the adaptive properties of the DASF algorithm. A formal convergence analysis with convergence and optimality proofs, including examples on how the convergence conditions can be checked, are provided in a companion paper \cite{musluoglu2022unifiedp2}. We have also provided a toolbox to automatically design and test the DASF algorithm for any user-defined instances of the (D)SFO problem (\ref{eq:prob_g}) or (\ref{eq:prob_g_full}), available in \cite{musluoglu2022dsfotoolbox}.

\section{Acknowledgments}
The authors would like to thank Charles Hovine for the brainstorming sessions and help with Sections \ref{sec:practical_E} and \ref{sec:adaptivity}.




\bibliographystyle{IEEEtran}
\bibliography{IEEEabrv,IEEEexample}

\end{document}